\documentclass[sigconf]{aamas}  %

\usepackage{balance}  %
\newcommand{\leaveout}[1]{}

\settopmatter{printacmref=true}

\fancyhead{}

\usepackage{booktabs}

\setcopyright{ifaamas}  %
\acmDOI{}  %
\acmISBN{}  %
\acmConference[AAMAS'19]{Proc.\@ of the 18th International Conference on Autonomous Agents and Multiagent Systems (AAMAS 2019)}{May 13--17, 2019}{Montreal, Canada}{N.~Agmon, M.~E.~Taylor, E.~Elkind, M.~Veloso (eds.)}  %
\acmYear{2019}  %
\copyrightyear{2019}  %
\acmPrice{}  %

\usepackage{amsmath}
\usepackage{amssymb}
\usepackage{amsthm}
\usepackage{mathtools}
\usepackage{parskip}
\usepackage{physics}
\usepackage{algorithm}
\usepackage[noend]{algpseudocode}
\usepackage{bbm}
\usepackage[english]{babel}
\usepackage{color}

\newcommand{\iscabr}{\textsc{Issue Selection}}
\newcommand{\biscabr}{\textsc{BISC} }
\newcommand{\isc}{\textsc{Issue Selection Control} }

\allowdisplaybreaks

\begin{document}

\title{Manipulating Elections by Selecting Issues}

\author{Jasper Lu}
\author{David Kai Zhang}
\authornote{EECS, Vanderbilt University, USA}
\affiliation{}
\email{jasper.lu@vanderbilt.edu}
\email{david.k.zhang@vanderbilt.edu}

\author{Zinovi Rabinovich \and Svetlana Obraztsova}
\authornote{SCSE, Nanyang Technological University, Singapore}
\affiliation{}
\email{zinovi@ntu.edu.sg}
\email{lana@ntu.edu.sg}

\author{Yevgeniy Vorobeychik}
\authornote{CSE, Washington University in St. Louis, USA}
\affiliation{}
\email{yvorobeychik@wustl.com}

\begin{abstract}
Constructive election control considers the problem of an adversary
who seeks to sway the outcome of an electoral process in order to
ensure that their favored candidate wins. We consider the
computational problem of constructive election control via issue
selection. In this problem, a party decides which political issues to
focus on to ensure victory for the favored candidate.  We also
consider a variation in which the goal is to maximize the number of
voters supporting the favored candidate.  We present strong negative
results, showing, for example, that the latter problem is
inapproximable for any constant factor.  On the positive side, we show
that when issues are binary, the problem becomes tractable in several
cases, and admits a 2-approximation in the two-candidate case.
Finally, we develop integer programming and heuristic methods for
these problems.
\end{abstract}

\keywords{Election control; social choice}

\maketitle

\section{Introduction}
	
The study of the extent to which elections are susceptible to subversion by malicious parties has received considerable attention under the general framework of \emph{election control}.
The computational complexity of this problem has been formally studied from a number of perspectives, such as control by adding and deleting candidates and voters~\cite{Bartholdi92,Hemaspaandra07}, and in the context of different voting systems~\cite{Menton2013,faliszewski2009llull,erdelyi2010control,hemaspaandra2009hybrid,erdelyi2010parameterized}.
However, there is an important means of manipulating election outcomes that has been largely ignored: that of determining which \emph{issues} are discussed and, consequently, which are most salient for voters when they come to the polls.

To illustrate, take three issues,
healthcare, environmental regulation, and immigration, and
suppose that all voters want universal health coverage and
environmental regulation, and a slight majority wish to restrict
immigration.
Suppose that positions are binary (support or oppose).
Now, consider two candidates, one who supports immigration,
environmental regulation, universal
healthcare, and the second who is opposed to all three.
Clearly, if all issues are considered, the former
candidate wins in a landslide.
However, if one party is able to skew discourse
\emph{entirely} towards immigration, the second candidate may narrowly
win.

We investigate the problem of election control through manipulating issues (which can also be viewed as a novel variant of the \emph{bribery} problem~\cite{DBLP:reference/choice/FaliszewskiR16,Put2016,article}).
In this problem, we assume that voters and candidates can be represented as points in a vector space over issues, where each vector represents one's (voter's or candidate's) position on all issues, and the preference ranking of candidates by a voter is induced by the norm distance between their respective position on issues in the natural way.
We then investigate the election control problem in the context of a choice of a subset of issues, whereby the distance between a voter and a candidate in the resulting restricted issue space determines the relative standing of this candidate to others.
Our study considers several related variations of this general framework: the decision problem in which the interested party either aspires to have a candidate of their choice win, and the optimization problem of maximizing the support (total number of votes) for a target candidate, all in the context of plurality elections.

We obtain a series of strong negative results.
First, we show that not only is the general problem of controlling elections through manipulating issues NP-Hard for both the decision problem and the variant aiming to maximize support, it is actually inapproximable for any constant factor for the latter variant.
Moreover, the problem remains hard whether one breaks ties in favor of the target candidate, or not, and even when there is either a single voter, or two candidates.
Second, we show that the problem remains hard if we restrict issues to be binary. 
On the other hand, we observe that under certain restrictions we can obtain positive results.
For example, the problem is tractable if there is only a single voter (unlike in the general case), and maximizing support is 2-approximable when there are two candidates.
Finally, we provide solution approaches for these problems based on integer linear programming, as well as a greedy heuristic.

\subsection{Related Work} 
Our work is related to two areas of research on social choice: the spatial theory of elections (including lobbying) and election control.

\paragraph{Spatial Theory of Elections and Lobbying Models} 
	Spatial models of elections were first introduced by Hotelling \cite{Hotelling29}, with extensive research following in the decades since~\citep{Davis66,Enelow90,Mckelvey90,Merrill99,Anshelevich15,Anshelevich16,Skowron17}.
	A major focus areas of research in the spatial model of elections is that of a candidate choosing where to locate in a policy space \citep{Smithies41,Black1948}. A key development in this field is the \textit{Median Voter Theorem (MVT)} \citep{Black1948}, which characterizes the special case 
of our model with two candidates and one issue. 
In this case, the winning candidate is the one preferred by the median voter. 
However, MVT's assumptions of absolute candidate mobility and global attraction are unrealistic, which continues to stimulate research on this model~\citep{sorr_2017,DBLP:conf/atal/ShenW17}. 
Algorithmic work in the spatial model has been somewhat more sparse, although with several recent studies focusing largely on social choice functions and distortion relative to a natural social choice function caused by common voting rules, such as plurality~\citep{Anshelevich15,Anshelevich16,Skowron17}.

An important research area within the spatial model is lobbying, whereby an actor wishes to change decisions by voters on issues so that majority vote on each issue corresponds to this actor's preference~\cite{Christian07,Goldsmith14}.
The two clear difference from our proposed research is that in our case, issue preferences determine which candidate wins, rather than votes on each issue separately, and that in our case manipulation targets groups of voters, whereas lobbying research is typically focused on changing votes for a subset of $k$ voters.
Somewhat related research assumes voters and candidates are fixed actors in a policy space, and considers the game of convincing voters of a candidate's truthfulness \citep{AustenSmith90,Lott89}.

\paragraph{Election Control}  
 Election control research focuses on the problem of tampering with an election to either ensure that a candidate wins or loses an election. The spatial theory of elections aims to explain why voters vote the way they do by modeling an election system as sets of \textit{voters} and \textit{candidates} as positions in an $n$-dimensional policy space, in which voters vote for those candidates closest to them in Euclidean distance. 

The computational problem of constructive election control, in which an adversary manipulates an election to ensure that a candidate wins was first studied by Bartholdi et al. \cite{Bartholdi92}, while Hemaspaandra et al. \cite{Hemaspaandra07} initiated the study of destructive control. Much work since then has been done in election control under different voting systems, such as range voting \citep{Menton2013}, approval voting \citep{Erdelyi09}, and others \citep{faliszewski2009llull,erdelyi2010control,hemaspaandra2009hybrid,erdelyi2010parameterized}, as well as in bribery
 \citep{Faliszewski09,erdelyi2010control,article,Put2016}.

\section{Control through Issue Selection}

We study the problem of \emph{election control through issue selection}.
To do so, we impose structure on a voting problem by assuming that voter preferences over candidates are solely based on their relative stance on the issues.
To be more precise, consider a collection of $\ell$ issues, and a space $X \subseteq \mathbb{R}^\ell$ of possible positions on the issues.
Thus, $x \in X$ represents a vector of positions on all issues, with $x_k$ the position on (opinion about) issue $k$.
In our setting, we have a collection of $m$ candidates, $C = \{c_i\}^m_{i=1}$, and $n$ voters, $V = \{v_j\}^n_{j=1}$, where each candidate $i$ and voter $j$ is characterized by a position vector (representing their respective positions on all $\ell$ issues), which we denote by $c_i$ and $v_j$, respectively, with $c_i, v_j \in X$.
We use $c_{ik}$ ($v_{jk}$) to denote the position of candidate $i$ (voter $j$) on issue $k$, and we refer to the vector of a candidate's or voter's beliefs as their belief vector.
Denote by $[a:b]$ the interval of all natural numbers from $a$ to $b$, and suppose that voters consider a nonempty subset of issues, $S \subseteq [1:n], S \ne \emptyset$, in deciding which candidate to vote for.
This set $S$ captures those issues which are \emph{salient} to voters, for example, due to a focus on these during campaigning.
We assume that a voter $v_j$ will rank candidates in order of their relative agreement on issues, as captured by an $l_p$ norm for integral $p \ge 1$ with respect to the set of issues $S$.
Henceforth, we focus on plurality elections, so that a voter $v_j$ would vote for a candidate $i$ which minimizes $\|v_j^S - c_i^S\|_p$,
where $x^S$ denotes a restriction of $x$ to issues in $S$.

We consider two \emph{constructive control} problems within this framework: control through issue selection (\textsc{Issue Selection Control (ISC)}), and maximizing support (\textsc{Max Support}), which we now define formally.

\begin{definition}[\textsc{Issue Selection Control (ISC)}]
Given a set of candidates $C$, voters $V$, and $\ell$ issues, is there a nonempty subset of issues $S \subseteq [1:\ell]$ such that a target candidate $c_1$ wins the plurality election?
\end{definition}
\begin{definition}[\textsc{Max Support}]
Given a set of candidates $C$, voters $V$, and $\ell$ issues, find a nonempty subset of issues $S \subseteq [1:\ell]$ which maximizes the number of voters who vote for a target candidate $c_1$.
\end{definition}

For both problems, we must define a rule by which to break ties. 
We consider both the best-case of undecided voters choosing the target candidate $c_1$, and the worst-case of undecided voters choosing another candidate. We use the same tie-breaking rule when several candidates are tied.

\section{Real-Valued Issues}

We begin our study of election control by analyzing its algorithmic
hardness when issue positions are unrestricted, i.e., $X =
\mathbb{R}^\ell$.
We show that the problem is computationally intractable, even for %
a single voter or with only two candidates.
However, the problem is tractable when the number of issues is
bounded by a constant.

\subsection{Issue Selection with a Single Voter}

Consider election control through issue selection
with only a single voter, $v$, which we term \textsc{Single-Voter
  \iscabr\ (SVIS)}.
We start by assuming that ties are broken in candidate $c_1$'s favor
(best-case tie breaking).
Note that in this setting, \iscabr\  and \textsc{Max Support}
are essentially equivalent: in either case,
we ask whether there exists a nonempty subset of issues $S \subseteq [1:\ell]$ such that
when restricted to these issues, candidate $c_1$ wins the voter $v$
(with a maximum support of 1 if $c_1$ wins, and 0 if it loses). 
Equivalently, we ask if there exists a nonempty subset $S$ such that
\begin{equation} \label{svis_objective}
\sum_{k \in S} \abs{c_{1k} - v_k}^p \leq \sum_{k \in S} \abs{c_{ik} - v_k}^p \quad \text{$\forall i\in[2:m]$}
\end{equation}
where $v_k$ is the sole voter's position on issue $k$.
Observe that condition \eqref{svis_objective} holds if and only if
\[ \sum_{k \in S} \abs{c_{ik} - v_k}^p - \abs{c_{1k} - v_k}^p \geq 0 \quad \text{$\forall i\in[2:m]$} \]
Thus, setting the entries of an auxiliary $(m-1) \times \ell$ matrix $M$
\begin{equation}
M_{i-1,k} = \abs{c_{ik} - v_k}^p - \abs{c_{1k} - v_k}^p,
i\in[2:m], k\in[1:\ell]
\end{equation}
we can equivalently ask whether there exists a nonempty subset $S$ of
the columns of $M$ such that the restriction of $M$ to these has nonnegative row sums. We will refer to such a restriction of an election as "highlighting" a set of issues.

\begin{theorem}\label{thm:svis_bc_npc}
\textsc{SVIS} with best-case tie breaking is NP-complete for any $l_p$ norm.
\end{theorem}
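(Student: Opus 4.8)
The plan is a reduction that uses the matrix reformulation established just above: the instance is a yes-instance of \textsc{SVIS} exactly when the auxiliary $(m-1)\times\ell$ matrix $M$ has a nonempty set of columns whose restriction has nonnegative row sums. Membership in NP is immediate, since such a subset $S$ is a polynomial-size certificate and checking whether $c_1$ wins only requires comparing the numbers $\sum_{k\in S}\abs{c_{ik}-v_k}^p$. For hardness I would reduce from \textsc{Partition}: given positive integers $a_1,\dots,a_\ell$ with $\sum_k a_k=2T$, decide whether some $S_0\subseteq[1:\ell]$ satisfies $\sum_{k\in S_0}a_k=T$ (we may assume $T>0$). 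The fact to record first is full realizability of matrices: for \emph{any} target real matrix $M$ and any integral $p\ge 1$, suitable positions exist, because a column $k$ constrains a candidate only through $\abs{c_{ik}-v_k}^p$; taking $v_k=0$, $c_{1k}=\beta_k^{1/p}$ and $c_{ik}=(M_{i-1,k}+\beta_k)^{1/p}$ for any $\beta_k\ge\max(0,-\min_i M_{i-1,k})$ does it (and for $p=1$ all positions are rational).

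Concretely I would use $\ell+1$ issues and $m=3$ candidates (so $M$ has two rows): for the \emph{item} columns $k\in[1:\ell]$ set $M_{1,k}=a_k$, $M_{2,k}=-a_k$, and for one extra \emph{offset} column $\ell+1$ set $M_{1,\ell+1}=-T$, $M_{2,\ell+1}=T$; realize $M$ by the recipe above. The structural heart of the argument is that the two rows are exact negatives of each other in every column, so for any $S$ the two row-sum conditions are $\sum_{k\in S}M_{1,k}\ge 0$ and $-\sum_{k\in S}M_{1,k}\ge 0$, i.e.\ $\sum_{k\in S}M_{1,k}=0$. Since the $a_k$ are positive, a nonempty $S$ missing column $\ell+1$ would force $\sum_{k\in S}a_k=0$, which is impossible; hence every nonempty feasible $S$ contains the offset column, and then $\sum_{k\in S\setminus\{\ell+1\}}a_k=T$, i.e.\ $S\setminus\{\ell+1\}$ is a \textsc{Partition} solution (nonempty, as $T>0$). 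Conversely, any \textsc{Partition} solution $S_0$ yields the nonempty set $S_0\cup\{\ell+1\}$, under which $\sum_{k\in S}\abs{c_{ik}-v_k}^p=2T$ for all three candidates $i$, so best-case tie-breaking elects $c_1$.

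The step I expect to be the real obstacle — and the one the write-up must handle carefully — is the forcing/tightness argument: showing that the only nonempty column subsets satisfying both constraints are precisely the offset column together with a subset summing to $T$. The ``opposite rows'' gadget achieves this with no slack, which is also exactly why best-case tie-breaking is required: every genuine solution lands on the constraint boundary, i.e.\ produces a perfect three-way tie. A secondary point worth one sentence is that for $p\ge2$ the constructed positions are $p$-th roots of integers rather than rationals, which is harmless because positions are permitted to be real and, in fact, every quantity compared when determining the winner is an integer by construction. As a byproduct this shows \textsc{SVIS} is already hard with three candidates, complementing the easy tractability of the two-candidate case.
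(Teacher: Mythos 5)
Your proof is correct, but it takes a genuinely different route from the paper's. The paper reduces from \textsc{0-1 Integer Linear Programming}: it embeds the constraint matrix $A$ directly as the rows of the auxiliary matrix $M$, appends $-b$ as an extra column, and adds a dummy candidate row (with entries $-\tfrac{1}{\ell+1}$ and $+1$) to force that column into any feasible $S$, so that column selection mimics setting $x_k=1$. You instead reduce from \textsc{Partition}, using only three candidates and a two-row ``opposite rows'' gadget that forces the selected column sums to be exactly zero; the forced offset column then converts that equality into the partition condition $\sum_{k\in S_0}a_k=T$. Both share the same realizability step (recovering positions from $M$ via $p$-th roots with the voter at the origin), and both correctly lean on best-case tie-breaking at the weak-inequality boundary --- your construction makes this vivid, since every solution is a perfect tie. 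What each buys: your reduction is leaner and pins down hardness already at three candidates with a single voter, which is a sharper structural statement; but because \textsc{Partition} is only weakly NP-hard, your construction necessarily uses numerically large positions, whereas the paper's ILP-based gadget keeps entries polynomially bounded (and is reused almost verbatim for the two-candidate, many-voter reduction in Theorem~\ref{thm:2c_isc_bc_npc}). For the theorem as stated, either reduction suffices; your NP-membership argument and the handling of irrational positions for $p\ge 2$ (all compared quantities are integers) are on par with the paper's level of rigor.
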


\begin{proof} First observe that \textsc{SVIS} is in NP. Indeed, given an instance of \textsc{SVIS} and a proposed subset $S$, it is trivial to verify whether $S$ satisfies condition \eqref{svis_objective} in polynomial time.
	
	We now show that \textsc{SVIS} is NP-hard via reduction from
        \textsc{0-1 Integer Linear Programming}, which is well-known
        to be NP-complete. In this problem, we are given a matrix $A
        \in \mathbb{Z}^{m \times 
\ell}$ and a vector $b \in \mathbb{Z}^\ell$, and we ask if there exists a vector $x \in \{0, 1\}^\ell$ such that $Ax \geq b$ componentwise.
	
	Given an arbitrary instance $(A,b)$ of \textsc{0-1 Integer Linear Programming} (ILP), we construct an $(m+1) \times (\ell+1)$ matrix $M$ as follows:
	\begin{align*}
	M_{i,k} &\coloneqq A_{i,k} & i &= 1, \dots, \ell & \qquad k &= 1, \dots, \ell \\
	M_{i,\ell+1} &\coloneqq -b_i & i &= 1, \dots, m & & \\
	M_{m+1,k} &\coloneqq -\frac{1}{\ell+1} &&& k &= 1, \dots, \ell \\
	M_{m+1,\ell+1} &\coloneqq 1.
	\end{align*}
	This construction is motivated by the observation that
        choosing a subset $S$ of columns of $M$ so that $c_1$ wins the
        election is analogous to choosing the positions of ones in a vector $x$ that satisfies $Ax \ge b$. Each row of $M$ corresponds to a candidate belief vector with the constraint vector $b$ included as an added issue. We force this issue to be considered by creating a dummy candidate whose beliefs coincide with $c_1$ on all but that issue.
	
	We now construct an instance of \textsc{SVIS} by setting the voter belief vector $v$ to be the zero vector and constructing a sequence of candidate belief vectors $C = \{c_i\}_{i=2}^{m}$ from $M$.
	\begin{align*}
	c_{1k} &\coloneqq \sqrt[p]{\abs{\min_i M_{ik}}} & k \in[1:\ell+1]& \\
	c_{i+1,k} &\coloneqq \sqrt[p]{M_{ik} + c_{1k}^p} & i \in[1:m+1],&\ k\in[1:\ell+1]
	\end{align*}
	We do this because we want to arrange that $M_{ik} = \abs{c_{ik}}^p - \abs{c_{1k}}^p$, using positive values of $c_{ik}$ for simplicity.
	It is then straightforward to see that the original instance of \textsc{0-1 Integer Linear Programming} is satisfiable if and only if our constructed instance of \textsc{SVIS} is satisfiable, by constructing a 0-1 vector $x$ with ones at precisely the indices in $S \setminus \{\ell+1\}$, or vice versa.
\end{proof}

\begin{theorem}\label{thm:svis_wc_hard}
	The worst-case version of \textsc{SVIS} is at least as hard as the best-case version of \textsc{SVIS}.
\end{theorem}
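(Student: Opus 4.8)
The plan is to give a polynomial-time many-one reduction from best-case \textsc{SVIS} to worst-case \textsc{SVIS}; this directly establishes the claimed comparison, and together with Theorem~\ref{thm:svis_bc_npc} it also gives NP-completeness of the worst-case variant. First I would recall the matrix reformulation from just before Theorem~\ref{thm:svis_bc_npc}: a best-case \textsc{SVIS} instance with voter $v$ and candidates $c_1,\dots,c_m$ is a yes-instance iff its auxiliary $(m-1)\times\ell$ matrix $M$, with $M_{i-1,k}=\abs{c_{ik}-v_k}^p-\abs{c_{1k}-v_k}^p$, has a nonempty set $S$ of columns whose restriction has nonnegative row sums. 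Under worst-case tie breaking, with a single voter $c_1$ wins iff $c_1$ is \emph{strictly} closer than every other candidate, so a worst-case \textsc{SVIS} instance is a yes-instance iff its auxiliary matrix has a nonempty set $S$ of columns whose restriction has \emph{strictly positive} row sums. I would also reuse the device in the proof of Theorem~\ref{thm:svis_bc_npc} showing that \emph{any} prescribed rational matrix $N$ can be realized as the auxiliary matrix of some \textsc{SVIS} instance (voter at the origin, $c_1$ and the $c_{i+1}$ obtained by taking $p$-th roots).

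Given a best-case \textsc{SVIS} instance, I would form $M$ and let $q$ be a common denominator of its (rational) entries, so every restricted row sum $\sum_{k\in S}M_{i-1,k}$ is an integer multiple of $1/q$; in particular, if such a sum is nonzero then its absolute value is at least $1/q$. I then set $\epsilon \coloneqq \frac{1}{2q\ell}$, define $M' \coloneqq M + \epsilon J$ with $J$ the all-ones matrix of the same shape, and let $I'$ be the worst-case \textsc{SVIS} instance realizing $M'$ exactly as above; this is clearly computable in polynomial time. For correctness: if a nonempty $S$ has $\sum_{k\in S}M_{i-1,k}\ge 0$ for all $i$, then $\sum_{k\in S}M'_{i-1,k}=\sum_{k\in S}M_{i-1,k}+\epsilon\abs{S}\ge\epsilon>0$, so $I'$ is a yes-instance. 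Conversely, if a nonempty $S$ has $\sum_{k\in S}M'_{i-1,k}>0$ for all $i$, then $\sum_{k\in S}M_{i-1,k}>-\epsilon\abs{S}\ge-\epsilon\ell=-\tfrac1{2q}>-\tfrac1q$, so by the gap property this sum cannot be negative and hence is $\ge 0$; thus the original instance is a yes-instance. Therefore $I'$ is a yes-instance of worst-case \textsc{SVIS} iff the given instance is a yes-instance of best-case \textsc{SVIS}, which is exactly the reduction we want.

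I expect the main obstacle to be precisely the gap between ``$\ge 0$'' and ``$>0$'': one cannot obtain the reduction by a naive tie-breaking perturbation of the input \emph{positions}, because best-case and worst-case yes-instances genuinely differ (e.g.\ if $c_1$, $c_2$ and $v$ all coincide, the instance is best-case-yes but no local perturbation of positions makes it worst-case-yes). Passing to the auxiliary matrix resolves this: there we may bump every entry upward by a single uniform amount, and rationality of the entries supplies the quantitative slack ($1/q$) guaranteeing the bump is large enough to make every feasible weak inequality strict yet too small to create a spurious strict solution. A morally equivalent but more ``constructive'' alternative, closer to the style of Theorem~\ref{thm:svis_bc_npc}, is to scale $M$ to an integer matrix, append a column of $\tfrac12$'s, and add two dummy candidates that force both the new column and at least one original column into $S$; I would include this only if a presentation purely in terms of added issues and candidates is preferred.
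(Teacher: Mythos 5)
Your reduction is correct, but it takes a genuinely different route from the paper's. The paper keeps the original matrix entries untouched and instead \emph{augments} the instance: it appends one extra issue on which every original candidate row carries a small bias $\epsilon/2$ toward $c_1$, plus a dummy candidate (all zeros on the original issues) whose only purpose is to force that new issue into $S$ under worst-case tie-breaking, and a second dummy candidate that forces at least one original issue into $S$; the bias then turns each weak inequality into a strict one exactly when the original subset sum was zero. You instead perturb the matrix itself, setting $M' = M + \epsilon J$, and you justify the choice of $\epsilon$ by passing to a common denominator $q$ of the rational entries so that every nonzero restricted row sum has magnitude at least $1/q$. The two constructions hinge on the same ``gap'' idea --- $\epsilon$ must be large enough to break feasible ties but too small to rescue a strictly negative sum --- but your quantitative choice of $\epsilon$ is explicitly polynomial-time computable, whereas the paper's $\epsilon$ is defined as a minimum over subset sums (and its definition of the index set $R(k)$ is garbled as printed), which is not obviously computable efficiently as stated; your version also dispenses with the dummy issue and dummy candidates entirely, since nonemptiness of $S$ transfers for free when the issue set is unchanged. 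One caveat you share with the paper rather than introduce: realizing the perturbed matrix as actual candidate positions via $p$-th roots may produce irrational coordinates, which both arguments quietly tolerate because the section works over $X=\mathbb{R}^{\ell}$. Your closing remark correctly identifies the ``more constructive'' variant (integer scaling, a column of halves, two forcing candidates) as essentially the paper's construction.
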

\begin{proof}[Proof Sketch]
  Consider an $m \times \ell$ matrix $M$ representing an arbitrary
  instance of the best-case version of \textsc{SVIS} and
  define $$\epsilon = \min_{i \in [1:m], k \in [1:\ell]}
  \frac{1}{2}\abs{\sum_{k' \in R(k)}M_{i,k'}},$$
  where the set $R(k) = \{\binom{r}{k}, r \in [1..\ell]\}$.
        We can create a new $(m + 2 )\times (\ell + 1)$ matrix $M'$ as follows:
	\begin{align*}
		M'_{i,k} &\coloneqq M_{i,k} & i &= 1, \dots, m & k &= 1, \dots, \ell \\
		M'_{m+1,k} &\coloneqq 0 &  && k &= 1, \dots, \ell \\
		M'_{i,k+1} &\coloneqq \frac{\epsilon}{2} & i &= 1, \dots, m+ 1 & \\
		M'_{m+2,k} &\coloneqq \epsilon & & & k &= 1, \dots, \ell \\
		M'_{m+2,\ell+1} &\coloneqq -\frac{\epsilon}{2}. & & & 
	\end{align*}
	
	Recall that in the worst-case version of \textsc{SVIS}, a voter will default to other candidates in cases of a tie. 
	So, we are forced to include issue $\ell+1$ in $S$ in order to win against candidate $m + 1$. 
	Once we include issue $\ell+1$, we bias the voter towards the
        target candidate and against each candidate by a small amount.
	Because of our choice of $\epsilon$, this bias will only affect the election in instances where the candidates are tied.
	However, we still have to include at least one other issue from $[1:\ell]$ to win against candidate $m + 2$. 
	
	This construction then turns into the best-case version of
        \textsc{SVIS} once we begin to consider combinations of issues
        from $[1:\ell]$ with issue $m+1$.
\end{proof}

\subsection{Issue Selection with Two Candidates}

While issue selection is hard even with a single voter, we now ask
whether it remains hard if we have only two candidates.
We term the resulting restricted problem \textsc{Two-Candidate
  \iscabr\ 
 (TCIS)}.
We show that both of the considered problem variants remain NP-hard. Furthermore, \textsc{Max Support} is actually
inapproximable to any constant factor even in this restricted setting.

\begin{theorem}\label{thm:2c_isc_bc_npc}
	\textsc{TCIS} with best-case
        tie breaking is NP-complete.
\end{theorem}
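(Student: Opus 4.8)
The plan is to reduce from the single-voter problem \textsc{SVIS}, which Theorem~\ref{thm:svis_bc_npc} shows is NP-complete, and to add the easy observation that \textsc{TCIS} lies in NP: given a candidate subset $S$, one computes for every voter the two restricted distances, tallies the plurality winner under best-case tie breaking, and checks whether $c_1$ wins, all in polynomial time. For the hardness direction I will use the matrix reformulation stated just before Theorem~\ref{thm:svis_bc_npc}: an \textsc{SVIS} instance is, up to realizability, a matrix $M \in \mathbb{R}^{r \times \ell}$ together with the question of whether some nonempty $S \subseteq [1:\ell]$ makes every row of $M$ restricted to $S$ have nonnegative sum.

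In the two-candidate setting, for a fixed $S$ voter $v_j$ votes for $c_1$ (best-case tie breaking) exactly when $\sum_{k \in S}\bigl(\abs{v_{jk}-c_{2k}}^p - \abs{v_{jk}-c_{1k}}^p\bigr) \ge 0$, so I associate to each voter the row $N_{jk} \coloneqq \abs{v_{jk}-c_{2k}}^p - \abs{v_{jk}-c_{1k}}^p$, and $c_1$ wins iff at least half of the rows of $N$ have nonnegative restricted sum. The reduction then takes the \textsc{SVIS} matrix $M$ with $r$ rows, turns each row into a voter, and pads with $r$ additional ``dummy'' voters whose associated row is strictly negative in every coordinate; such a row has a negative restricted sum over every nonempty $S$, so a dummy voter always votes for $c_2$. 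With $2r$ voters in total, $c_1$ wins iff he gets at least $r$ votes, which forces all $r$ genuine voters to vote for $c_1$; hence the constructed \textsc{TCIS} instance is a yes-instance iff $M$ admits a nonempty $S$ with all row sums nonnegative, i.e. iff the \textsc{SVIS} instance is a yes-instance. (One could equally reduce from \textsc{0-1 Integer Linear Programming} directly in the same way, but routing through \textsc{SVIS} keeps the argument self-contained.)

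The step that needs care is realizability: I must produce actual positions $c_1,c_2 \in \mathbb{R}^\ell$ and voter positions $v_j \in \mathbb{R}^\ell$ realizing the prescribed $N$ (the rows of $M$ on top, strictly negative rows below). Since positions are unrestricted, I fix $c_{1k}=0$ and $c_{2k}=t_k$ for a constant $t_k>0$ chosen per issue, and then pick each $v_{jk}$ to hit the target $N_{jk}$: the map $v \mapsto \abs{v-t_k}^p - \abs{v}^p$ is continuous and its range contains $0$ and expands with $t_k$ (for $p=1$ it is exactly $[-t_k,t_k]$; for $p>1$ it is all of $\mathbb{R}$), so choosing $t_k$ larger than the largest absolute target in column $k$ makes every required entry attainable, the strictly negative dummy entries included. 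This construction is polynomial in the size of the \textsc{SVIS} instance, which completes the reduction. I expect the main obstacle to be pinning down this realizability argument cleanly for a general $l_p$ norm and getting the tie-breaking bookkeeping right (that exactly $r$ of $2r$ votes rounds to a win for $c_1$ under best-case tie breaking); the combinatorial core — padding with always-losing voters to force universal satisfaction of the genuine rows — is the straightforward part.
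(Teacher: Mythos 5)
Your proposal is correct, and it shares the paper's structural skeleton --- reformulate a two-candidate instance as a matrix $N$ with one row per voter, pad with dummy voters whose rows are uniformly negative so that they always vote for $c_2$ and thereby force every genuine row to have nonnegative restricted sum, and then realize the matrix by fixing $c_{1k}=0$, $c_{2k}>0$ and solving $\abs{v-c_{2k}}^p-\abs{v}^p = N_{jk}$ for $v_{jk}$ via the intermediate value theorem. The one genuine difference is the source problem: the paper reduces directly from \textsc{0-1 Integer Linear Programming}, which obliges it to re-encode the constraint vector $b$ as an extra issue and add a special row ($M_{2n+1,\ell+1}=\ell+1$) whose sole purpose is to force that issue into $S$; you instead reduce from \textsc{SVIS} (already shown NP-complete in Theorem~\ref{thm:svis_bc_npc}), reinterpreting its candidate-indexed rows as voter-indexed rows, so the forced-column gadget is inherited for free and your padding is a clean $r$-versus-$r$ split. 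This buys a shorter, more modular argument; the paper's direct route buys a self-contained proof that does not depend on the realizability details of the earlier reduction. Your tie-breaking bookkeeping (with $2r$ voters, $c_1$ wins iff it gets at least $r$ votes, since a tie between the candidates is also broken in $c_1$'s favor) and your range analysis of $v\mapsto\abs{v-t_k}^p-\abs{v}^p$ (exactly $[-t_k,t_k]$ for $p=1$, all of $\mathbb{R}$ for $p>1$) are both sound, and in fact slightly more explicit than the paper's corresponding lemma.
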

\begin{proof}
	First, observe that \textsc{TCIS}
         is in NP because, given a set $S$ of issues to
        highlight, we can easily check if $c_1$ wins the election in
        polynomial time. We use a reduction from \textsc{0-1 Integer
          Linear Programming} to prove it's NP-Hard.

Next, consider the issue selection problem with two candidates and a set of
voters $V$.
Note that we successfully control the election iff the following
condition holds for at least half of the voters $v_j$ (remember that ties are broken in $c_1$'s favor):
\begin{equation} \label{fcis_objective}
\sum_{k \in S} \abs{c_{1k} - v_{jk}}^p \leq \sum_{k \in S}
\abs{c_{2k} - v_{jk}}^p %
\end{equation}
We now construct a matrix $M$ with entries 
\begin{equation}
M_{j,k} = \abs{c_{2k} - v_{jk}}^p - \abs{c_{1k} - v_{jk}}^p, j \in
[1:n], k \in [1:\ell].
\end{equation}
We can equivalently ask for a nonempty subset $S$ of columns of
$M$ such that the restriction of $M$ to those columns maximizes the
number of indices $j$ s.t.\ $\sum_{k \in S} M_{jk} \ge 0$. 
	
	Let $A$ be our ILP matrix, and $b$ - the ILP constraints. Then, we can reduce ILP to \textsc{TCIS} by creating the following $(2n + 1) \times (\ell+1)$ matrix $M$: %
	\begin{align*}
	M_{j,k} &\coloneqq A_{j,k} & j &\in[1:n]& k &\in[1:\ell]\\
	M_{j,k} &\coloneqq -1 & j &\in[n+1:2n+1] & k &\in[1:\ell]\\
	M_{j,\ell+1} &\coloneqq -b_j & j &\in[1:n]\\
	M_{j,\ell+1} &\coloneqq 0 & j &\in[n+1:2n]&\\
	M_{2n+1,\ell+1} &\coloneqq \ell + 1
	\end{align*}
	As in our reduction of \textsc{SVIS}, we represent the constraint vector $b$ as an
        issue that must be put in $S$ in order for $S$ to win. We also
        create $n$ dummy voters with all negative entries. This will
        force us to look for assignments of $S$ that satisfy all rows
        that correspond to constraints of ILP. If $c_1$ can win the
        given election, we return yes for ILP, and no if $c_1$ cannot
        win. 

        Finally, we show that for any $M$ we can derive voter preferences
consistent with it.
Since definition of $M$ is independent for different issues $k$, it
will suffice to do this for an arbitrary issue $k$ ($k$th column of
$M$, which we denote by $M_k$).
Consequently, consider a column $M^k$, and define $\bar{M}_k = \max_j
|M_{j,k}|$ (the value of $M_k$ with the largest magnitude).
Define $c_{1k} = 0$ and $c_{2k} = \bar{M}_k^{1/p}$.
Additionally, define a function $f(z) = |c_2 - z|^p - |c_1 - z|^p$ for
$z \in [0,c_2]$.
Clearly, this function is continuous, and $f(0) = \bar{M}_k$ while
$f(c_2) = -\bar{M}_k$.
Then by the intermediate value theorem, for any $M_{jk}$, we can find
a $v_{jk}$ such that $f(v_{jk}) = M_{jk}$.
Repeating the process for each issue $k$ gives us the construction.
\end{proof}

Next, we turn to the \textsc{Max Support} version of the issue
selection problem with two candidates; we term this
\textsc{Two-Candidate Max Support (TCMS)}.
We show that not only is it NP-hard, it is inapproximable.
\begin{theorem}
\label{T:2cms}
\textsc{TCMS} with best-case tie breaking is NP-hard for any $l_p$
norm. Moreover, it cannot be approximated to any constant factor
unless $P=NP$.
\end{theorem}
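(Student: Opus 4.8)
The plan is to obtain both parts at once with a single value-preserving reduction from \textsc{Maximum Independent Set} (equivalently, \textsc{Maximum Clique} on the complement graph), which is NP-hard to approximate within any constant factor --- indeed within $n^{1-\epsilon}$ --- by the results of H{\aa}stad and Zuckerman. NP-hardness of computing the \textsc{TCMS} optimum exactly then follows a fortiori; it can also be read off from Theorem~\ref{thm:2c_isc_bc_npc}, since an exact \textsc{TCMS} solver decides \textsc{TCIS}.

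Given a graph $G=(V,E)$ with $\ell=\abs{V}$ I would build a two-candidate instance having one issue and one voter per vertex, whose maximum achievable support equals $\alpha(G)$ exactly. By the last part of the proof of Theorem~\ref{thm:2c_isc_bc_npc} it suffices to specify the integer matrix $M$ with $M_{v,k}=\abs{c_{2k}-v_{vk}}^p-\abs{c_{1k}-v_{vk}}^p$; the intermediate-value argument there realizes any such $M$ by genuine candidate and voter positions, for every $l_p$ norm. Put $W=\ell+1$ and set
\begin{align*}
M_{v,v} &\coloneqq \ell, \\
M_{v,k} &\coloneqq -W \quad \text{if } vk \in E, \\
M_{v,k} &\coloneqq -1 \quad \text{if } v \neq k \text{ and } vk \notin E.
\end{align*}
A short case analysis shows that for every nonempty $S\subseteq V$ voter $v$ supports $c_1$ (i.e.\ $\sum_{k\in S}M_{v,k}\ge 0$) precisely when $v\in S$ and no neighbour of $v$ lies in $S$: if $v\notin S$ the sum is at most $-\abs{S}<0$; if $v\in S$ but some neighbour of $v$ is in $S$ the sum is at most $\ell-W<0$; and if $v\in S$ with no neighbour in $S$ the sum equals $\ell-(\abs{S}-1)\ge 1$. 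Hence the support of $c_1$ under $S$ is exactly $\abs{\{v\in S:N(v)\cap S=\emptyset\}}$, which is always an independent set of $G$; conversely, taking $S$ to be a maximum independent set makes all of its vertices support $c_1$. So the optimal support equals $\alpha(G)$, and a $c$-approximation for \textsc{TCMS} would give a $c$-approximation for \textsc{Maximum Independent Set}; since the construction is polynomial and identical for every $l_p$ norm, the claimed inapproximability and NP-hardness follow.

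The step I expect to be the main obstacle is designing $M$ so that ``belonging to $S$ with no selected neighbour'' is the \emph{unique} way a voter can support $c_1$: the $-1$ entries on non-edges are exactly what stops a ``far-away'' voter $v\notin S$ from being counted, and the weights $\ell$ and $W=\ell+1$ must be balanced so that a single selected neighbour is fatal while the at most $\ell-1$ unit penalties from other selected non-neighbours are survivable; one also checks that $\abs{M_{v,k}}\le\ell+1$, keeping the realization polynomial. If one prefers a self-contained argument avoiding the strong H{\aa}stad--Zuckerman bound, the same gadget composed with a gap version of \textsc{SAT} (via the PCP theorem and parallel repetition) already yields arbitrarily large constant gaps, at the price of a more delicate soundness analysis bounding how many voters a ``dishonest'' choice of $S$ can satisfy.
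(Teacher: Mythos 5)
Your proposal is correct and follows essentially the same route as the paper: a value-preserving reduction from \textsc{Maximum Independent Set} using a $\abs{V}\times\abs{V}$ matrix gadget with a large positive diagonal, a prohibitive negative entry on edges, and $-1$ on non-edges, realized as a two-candidate instance via the intermediate-value construction from Theorem~\ref{thm:2c_isc_bc_npc}. Your constants differ slightly (diagonal $\ell$ and edge penalty $-(\ell+1)$ versus the paper's $\ell-1$ and $-\ell$) and your case analysis is more explicit than the paper's, but the argument and the inapproximability conclusion are the same.
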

\begin{proof}
	We can now show that \textsc{TCMS} is NP-hard by restricting
        $\ell$ to $2$ and reducing from \textsc{Maximum Independent
          Set (MIS)}. Given an undirected graph $G = (V, E)$ on $|V|$ vertices, MIS asks to select a maximal subset of vertices $S \subseteq V$ so that $S$ is an independent set (i.e., no pair of vertices in $S$ is connected by an edge).%
	
	Given any instance of MIS, we can represent that instance as an instance of \textsc{TCMS} by first creating a $|V| \times |V|$ matrix with every value along the diagonal equal to $|V| - 1$. For every pair of vertices $u, v$, set $M_{u,v} = M_{v,u} \coloneqq -|V|$ if $u$ and $v$ are connected in $G$, and $-1$ otherwise. 
	Now, if we were to select an issue corresponding to vertex
        $u$ with neighbor $v$, then we cannot hope to select any other
        subset of issues such that row $v$ sums to greater than or
        equal to 0. Thus, the action of selecting columns of $M$ to
        include in $S$ corresponds to selecting vertices of $G$ to
        be in our independent set, and maximizing the number of rows
        in this manner corresponds to finding a maximum independent
        set. 

To complete the reduction, what remains 
to prove is that we can derive voter belief and candidate belief
vectors for any $M$ constructed in this manner.
The associated lemma is provided in the supplement.

Inapproximability 
follows directly from our reduction of \textsc{MIS} to
\textsc{TCMS}: we know that MIS is NP-hard to
approximate within any constant factor $c > 0$
\cite{Arora:2009:CCM:1540612}, and our reduction from MIS is
approximation-preserving.
\end{proof}

The next results show that the worst-case tie breaking setting is no
easier than when ties are broken in $c_1$'s favor.
\begin{theorem}\label{thm:isc_wc_hard}
	The worst-case version of \textsc{TCIS} is at least as hard as the best-case version of \textsc{TCIS} for the two-candidate case.
\end{theorem}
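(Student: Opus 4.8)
The plan is to mirror the reduction behind Theorem~\ref{thm:svis_wc_hard}: we would reduce the best-case version of \textsc{TCIS} to its worst-case version by augmenting the matrix of an arbitrary best-case instance with one extra ``tie-breaking'' issue and several groups of dummy voters. Recall that for two candidates a best-case instance is an $n\times\ell$ matrix $M$ asking for a nonempty set of columns $S$ on which at least half the rows have nonnegative sum, whereas in the worst-case version a voter supports $c_1$ only when its row sum over $S$ is strictly positive and $c_1$ wins only with a strict majority.

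First I would clear denominators (which does not change the answer) so that $M\in\mathbb{Z}^{n\times\ell}$, and fix $\epsilon$ small enough that adding $\epsilon/2$ to any row-subset-sum never turns a negative quantity nonnegative (e.g.\ $\epsilon=1$) --- the same device used in Theorem~\ref{thm:svis_wc_hard}. Setting $t=\lceil (n+1)/2\rceil$, I would build a matrix $M'$ on columns $[1:\ell+1]$ with rows: the $n$ original rows of $M$, each given the tiny pro-$c_1$ bias $\epsilon/2$ in column $\ell+1$; $t$ copies of a ``C1-dummy'' $(0,\dots,0,\epsilon/2)$, which in the worst case votes $c_2$ exactly when $\ell+1\notin S$; $t$ copies of a ``C2-dummy'' $(\epsilon,\dots,\epsilon,-\epsilon/2)$, which votes $c_2$ exactly when $S=\{\ell+1\}$; and $2t-1$ ``filler'' rows with all entries $-1$, which always vote $c_2$. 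Finally I would turn $M'$ back into genuine positions for two candidates and $n+4t-1$ voters using the intermediate-value-theorem construction from the proof of Theorem~\ref{thm:2c_isc_bc_npc}.

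The verification would be a case analysis on a proposed winning set $S'\subseteq[1:\ell+1]$. If $\ell+1\notin S'$, the $t$ C1-dummies together with the $2t-1$ fillers already hand $c_2$ a strict majority; if $S'=\{\ell+1\}$, the $t$ C2-dummies and the fillers do the same; so $c_1$ can only win in the ``good'' configuration $S'=S_0\cup\{\ell+1\}$ with $\emptyset\ne S_0\subseteq[1:\ell]$. There, the choice of $\epsilon$ makes an original voter strictly support $c_1$ iff its $M$-row-sum over $S_0$ is $\ge 0$, all C1- and C2-dummies support $c_1$, and all fillers support $c_2$; substituting these counts into the strict-majority condition and simplifying (using $t\ge (n+1)/2$) should show that $c_1$ wins $M'$ iff at least half the original rows have nonnegative sum over $S_0$, i.e.\ iff $S_0$ solves the best-case instance $M$. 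Hence $M'$ is a yes-instance of worst-case \textsc{TCIS} iff $M$ is a yes-instance of best-case \textsc{TCIS}, and $M'$ has polynomial size.

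The hard part --- and the reason the argument needs more than a literal copy of Theorem~\ref{thm:svis_wc_hard} --- is that with only two candidates one cannot ``force'' an issue to be highlighted by adding a single candidate that ties $c_1$: every dummy voter must cast a vote, and in its inactive state it actually supports $c_1$, so the gadget forbidding $\ell+1\notin S'$ and the gadget forbidding $S'=\{\ell+1\}$ pull in opposite directions. The fix is to make each dummy group of size $\Theta(n)$ so that it dominates the original voters in the configuration it is meant to kill, and to add $\Theta(n)$ filler votes that shift the majority threshold so that in the admissible configuration ``$c_1$ wins the enlarged election'' coincides exactly with ``$c_1$ wins the original best-case election''; getting this counting (including the parity of $n$) right is where the care lies.
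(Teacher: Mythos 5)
Your reduction is correct and follows essentially the same route as the paper's: append one extra column giving every original voter a $+\epsilon/2$ nudge (to simulate the weak inequality of best-case tie-breaking) and add $\Theta(n)$ dummy voters that rule out the degenerate selections and keep the majority threshold aligned --- the paper uses two balanced blocks of $n$ dummies (entries $+x$ and $-x$ on the original issues, $-\epsilon/2$ on the new one) where you use three blocks totalling $4t-1$, but the gadgets play the same roles. One small nit: in your case $\ell+1\notin S'$ with $n$ odd, the $3t-1$ guaranteed $c_2$-votes are exactly half of the electorate rather than a strict majority, but this still suffices because worst-case tie-breaking denies $c_1$ the win on a tie.
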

\begin{proof}[Proof Sketch]
Given an $n \times \ell$ matrix $M$ associated with a two-candidate
instance of best-case issue selection, define $\epsilon$ as in the
proof of Theorem~\ref{thm:svis_wc_hard}.
Further, we let $x\coloneqq \max\limits_{j\in[1:n],k\in[1:\ell]}\abs{M_{j,k}}$, and create a $3n \times (\ell+1)$ matrix $M'$ as follows:
\begin{align}
M'_{j,k} &\coloneqq M_{j,k} & j &\in[1:n]& k &\in[1:\ell]\nonumber\\
M'_{j,k} &\coloneqq x & j &\in[n+1:2n] & k &\in[1:\ell]\label{thm:isc_wc_hard_eq3}\\
M'_{j,k} &\coloneqq -x & j &\in[2n+1:3n]& k &\in[1:\ell]\label{thm:isc_wc_hard_eq4}\\
M'_{j,\ell+1} &\coloneqq \frac{\epsilon}{2} & j &\in[1:n]\nonumber\\
M'_{j,\ell+1} &\coloneqq -\frac{\epsilon}{2} & j &\in[n+1:3n]\nonumber
\end{align}

Once again, we choose a value of $\epsilon > 0$ such that $\epsilon$ will affect the election only if a voter is undecided. The proper assignment is shown in the supplement.

Recall that in the worst-case version of \textsc{TCIS}, undecided voters (rows of $M'$ with a net zero value) will default to a candidate other than $c_1$. 
With the addition of column $\ell+1$, any undecided voters will now be ``nudged'' in the direction of $c_1$ instead. 
Also, since the values of column $n+1$ are smaller than the difference of any two values of $M$, the issue affects the election only if a voter is actually undecided. 
So, issue $\ell+1$ appropriately mimics the weak inequality used in the best-case version of \textsc{TCIS}, and if a candidate wins an election in the worst-case reduction, they win the election in the best-case version, and vice versa. 

Note: we add $2n$ extra voters to the problem to set things up such that including issue $n+1$ would not be sufficient for winning the election. We also choose $2n$ voters specifically so that we can be guaranteed to split voters evenly between $c_1$ and $c_2$ with our assignments in Equations~\ref{thm:isc_wc_hard_eq3} and ~\ref{thm:isc_wc_hard_eq4}.
\end{proof}

\begin{corollary}
The worst-case version of \textsc{TCMS} is NP-hard.
\end{corollary}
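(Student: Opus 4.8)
The plan is to observe that the reduction constructed in the proof of Theorem~\ref{thm:isc_wc_hard} is not merely a reduction between decision problems, but is in fact \emph{support-preserving}, and then to feed it an instance of best-case \textsc{TCMS} (which is NP-hard by Theorem~\ref{T:2cms}) rather than an instance of best-case \textsc{TCIS}. Concretely, I would start from a best-case \textsc{TCMS} instance, i.e.\ an $n \times \ell$ matrix $M$ together with belief vectors realizing it, and let $\mathrm{OPT}(M)$ denote the maximum, over all nonempty $S \subseteq [1:\ell]$, of the number of rows $j$ with $\sum_{k \in S} M_{j,k} \ge 0$. I would then apply verbatim the construction of the $3n \times (\ell+1)$ matrix $M'$ from Theorem~\ref{thm:isc_wc_hard}, with $\epsilon$ chosen exactly as in Theorem~\ref{thm:svis_wc_hard}; realizing $M'$ by concrete voter and candidate belief vectors is the same intermediate-value-theorem argument used in Theorems~\ref{thm:2c_isc_bc_npc} and~\ref{T:2cms}, which I would defer to the supplement.

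Next I would run the case analysis already sketched for Theorem~\ref{thm:isc_wc_hard}, but tracking vote counts rather than just the winner. If a highlighted set $S$ omits issue $\ell+1$, then adding $\ell+1$ to $S$ leaves the votes of the $2n$ dummy voters unchanged (their entries $\pm x$ dominate $\epsilon/2$) and can only move a genuine voter toward $c_1$, so without loss of generality $\ell+1 \in S$. With $\ell+1 \in S$ and at least one original issue highlighted, the $n$ ``all-$x$'' dummies all vote for $c_1$, the $n$ ``all-$(-x)$'' dummies all vote for $c_2$, and by the choice of $\epsilon$ a genuine voter $j$ votes for $c_1$ under the worst-case rule iff $\sum_{k \in S \cap [1:\ell]} M_{j,k} + \tfrac{\epsilon}{2} > 0$, which is equivalent to $\sum_{k \in S \cap [1:\ell]} M_{j,k} \ge 0$. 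Since taking $S = \{\ell+1\}$ alone gives $c_1$ only $n$ votes, some original issue is highlighted at the optimum, and hence the maximum support for $c_1$ over all nonempty $S$ in $M'$ equals $n + \mathrm{OPT}(M)$.

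Therefore best-case \textsc{TCMS} many-one reduces to worst-case \textsc{TCMS} (the query ``is there a nonempty $S$ giving support $\ge K$?'' for $M$ becomes ``is there a nonempty $S$ giving support $\ge K+n$?'' for $M'$), so worst-case \textsc{TCMS} is NP-hard by Theorem~\ref{T:2cms}; the additive shift by $n$ is exactly why this argument yields only NP-hardness and not inapproximability, consistent with the statement. The step I expect to require the most care is verifying that the single choice of $\epsilon$ from Theorem~\ref{thm:svis_wc_hard} simultaneously makes the strict inequality $\sum_{k \in T} M_{j,k} + \tfrac{\epsilon}{2} > 0$ coincide with the weak inequality $\sum_{k \in T} M_{j,k} \ge 0$ for every row $j$ and every nonempty $T \subseteq [1:\ell]$ at once; this, together with the routine IVT realization of $M'$, is the same bookkeeping already promised to the supplement for the earlier worst-case theorems, so I would not expect it to need new ideas.
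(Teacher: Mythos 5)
Your proof is correct, but it is considerably more elaborate than what the paper relies on. The paper states this corollary with no proof at all, immediately after Theorem~\ref{thm:isc_wc_hard}; the intended argument is the one-line observation that \textsc{Max Support} subsumes \textsc{Issue Selection Control} in the two-candidate setting (an algorithm for worst-case \textsc{TCMS} decides worst-case \textsc{TCIS} by comparing the maximum attainable support against the majority threshold), so NP-hardness follows directly from the already-established NP-hardness of worst-case \textsc{TCIS} (Theorems~\ref{thm:2c_isc_bc_npc} and~\ref{thm:isc_wc_hard}). You instead re-derive the result by showing that the gadget of Theorem~\ref{thm:isc_wc_hard} is support-preserving up to the additive shift $\mathrm{OPT}(M') = \mathrm{OPT}(M) + n$, reducing from best-case \textsc{TCMS} (Theorem~\ref{T:2cms}). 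Your route costs more verification --- and it inherits the paper's own under-specified choice of $\epsilon$, which as literally defined would require minimizing over exponentially many subset sums (the standard fix is to take $\epsilon$ below the granularity of the rational entries) --- but it buys a quantitative statement about optima rather than just a decision-level implication, and you correctly note that the additive $n$ shift is exactly why no constant-factor inapproximability transfers. Both arguments are sound; yours proves a slightly stronger fact than the corollary asks for.
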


\section{Binary Issues}

We have shown that election control through issue selection is hard in general.
However, real world opinions may have a variety of restrictions.
For example, legislative issues can be viewed as \emph{binary issues}, where a voter opinion can take only two values: support or oppose.

Formally, in binary versions of the issue selection problems, $X = \{0, 1\}^\ell$.
Voters vote for the candidate with whom they agree on most issues. 
Let \textsc{Binary Issue Selection Control (BISC)} be the variant of
\iscabr  over a binary domain and, similarly, let \textsc{Binary Max
  Support (BMS)} be the corresponding variant of the \textsc{Max Support} problem.

\subsection{Binary Issue Selection with 1, 2 and 3 Voters}

We start by considering again the problem of issue selection with a
single voter, which we showed to be NP-Hard in the general case of
real-valued issues.
We show that this problem is now in P.

As before, it suffices to consider solely \textsc{Single-Voter \biscabr}.
We start with the case when ties are broken in $c_1$'s favor (best-case tie-breaking).
Consider the following \textsc{Single Issue Win} algorithm:
\begin{algorithm}
Check if there is an issue such that either (a) $c_1$ agrees
  with the voter $v$, or (b) no other candidate $c_j$ agrees with $v$.
  If it exists, return YES.  Otherwise, return NO.
\end{algorithm}
\begin{theorem}
The \textsc{Single Issue Win} algorithm solves \textsc{Single-Voter \biscabr} with best-case tie-breaking.
\end{theorem}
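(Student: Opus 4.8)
The plan is to prove correctness of the \textsc{Single Issue Win} algorithm by a direct two-direction argument, built around one structural observation: with a single binary voter it is never necessary to highlight more than one issue. First I would record the winning criterion in this setting. Writing $a_i(S)$ for the number of issues $k \in S$ on which candidate $c_i$ agrees with the voter $v$ (that is, $c_{ik} = v_k$), a voter restricted to a nonempty set $S$ votes for a candidate maximizing $a_i(S)$, so under best-case tie-breaking $c_1$ wins on $S$ if and only if $a_1(S) \ge a_i(S)$ for every $i \in [2:m]$.

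For soundness --- if the algorithm returns YES, I would exhibit a concrete winning set. A YES answer is witnessed by some issue $k$ satisfying (a) or (b); take $S = \{k\}$, so every $a_i(S) \in \{0,1\}$. In case (a) we have $a_1(S) = 1$, which is maximal, so $c_1$ wins; in case (b) every $c_i$ with $i \ge 2$ has $a_i(S) = 0$, hence $a_1(S) \ge 0$ is again maximal and $c_1$ wins by best-case tie-breaking (even in the subcase $a_1(S) = 0$). Either way $S = \{k\}$ is a nonempty highlighting on which $c_1$ wins, so the YES answer is correct.

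For completeness --- if the algorithm returns NO, I would show that no highlighting can work. A NO answer means every issue $k$ fails both (a) and (b): $c_{1k} \ne v_k$, and there is some $j(k) \in [2:m]$ with $c_{j(k),k} = v_k$. Now let $S$ be any nonempty subset of issues. Since $c_1$ disagrees with $v$ on all issues, $a_1(S) = 0$; picking any $k \in S$, the associated candidate $c_{j(k)}$ contributes an agreement, so $a_{j(k)}(S) \ge 1 > a_1(S)$ and $c_1$ does not win on $S$. As $S$ was arbitrary, $c_1$ cannot be made to win, so the NO answer is correct. I would finish by noting the algorithm runs in polynomial time, since it simply compares each of the $\ell$ issues across the $m$ belief vectors.

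I do not expect a genuine obstacle here: the entire content is the reduction to singleton highlightings, after which it is a routine case split. The two points that need care are (i) the nonemptiness constraint on $S$ --- harmless, since a singleton already suffices --- and (ii) consistently using best-case tie-breaking, so that merely \emph{tying} for the maximum agreement count (in particular, all candidates scoring $0$) counts as a win for $c_1$; this is exactly what makes branch (b) of the algorithm sound.
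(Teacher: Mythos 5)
Your proposal is correct and follows essentially the same route as the paper: the decisive step in both is the NO direction, where one observes that a NO answer forces $c_1$ to disagree with $v$ on every issue while every issue has some rival agreeing, so on any nonempty $S$ the target scores $0$ agreements against a rival's $\ge 1$. You additionally spell out the YES direction (singleton highlightings plus the tie-breaking case split), which the paper dismisses as immediate; that is a harmless and slightly more complete presentation of the same argument.
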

\begin{proof}
It suffices to show that whenever \textsc{Single Issue Win} returns NO, $c_1$ cannot win the election.
Consider an arbitrary subset of issues $S$.
Since the answer is NO, it must be that for each issue $k \in S$, $c_1$ disagrees with $v$ on $k$.
Consequently, $\|v - c_1\| = |S|$.
Choose a $c_j$ which agrees with $v$ on some issue $k \in S$.
Then $\|v - c_j\| \le |S|-1$, that is, $c_1$ cannot win for issues restricted to $S$.
Since $S$ is arbitrary, the result follows.
\end{proof}

In fact, we can easily generalize the algorithm for a single voter to a
setting with two voters by simply applying the algorithm for each voter.
  \begin{corollary}
\label{C:bsic-bc}
    \textsc{2-Voter \biscabr} problem with best-case tie-breaking is poly-time solvable.
  \end{corollary}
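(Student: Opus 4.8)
The plan is to reduce the two-voter instance to two independent calls of the \textsc{Single Issue Win} algorithm, one per voter, and to answer YES exactly when at least one of the calls succeeds. The key elementary fact is that, in a plurality election with exactly two voters under best-case tie-breaking, the target candidate $c_1$ wins under a highlighted issue set $S$ if and only if at least one of the two voters votes for $c_1$ under $S$. Indeed, if both voters vote for $c_1$ then $c_1$ trivially wins; if exactly one does, then $c_1$ holds one vote and is tied against whichever candidate the other voter picked, so best-case tie-breaking awards the election to $c_1$; and if neither voter votes for $c_1$, then $c_1$ gets zero votes while some other candidate gets at least one, so $c_1$ loses.

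Next I would recall that, again with best-case tie-breaking inside a single voter's ranking, voter $v_j$ votes for $c_1$ under $S$ precisely when $\|v_j^S - c_1^S\| \le \|v_j^S - c_i^S\|$ for every $i \in [2:m]$ — which is exactly the \textsc{Single-Voter \biscabr} winning condition for the one-voter instance obtained by deleting the other voter. Combining this with the fact above, there is a nonempty $S$ under which $c_1$ wins the two-voter election if and only if there is a nonempty $S$ under which $c_1$ wins voter $v_1$, or there is a nonempty $S$ under which $c_1$ wins voter $v_2$: the existential quantifier over $S$ distributes over the disjunction, so we need not find a single issue set that works for both voters simultaneously. By the preceding theorem, each of these two single-voter questions is decided correctly by \textsc{Single Issue Win}, which runs in time polynomial in $m$ and $\ell$ (it scans each issue once and, for each, inspects the agreement of all $m$ candidates with the voter). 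Running it once for $v_1$ and once for $v_2$ and returning the disjunction of the outcomes therefore solves \textsc{2-Voter \biscabr} with best-case tie-breaking in polynomial time.

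There is no substantive obstacle in this argument; the only point that warrants care is the logical step that permits treating the two voters separately — because $c_1$ needs the support of only one voter, an issue set good for either voter alone already suffices, so the problem does not collapse into a joint-feasibility question over $S$. I would also remark in passing that this decomposition does not extend to three or more voters, where $c_1$ may genuinely require two voters' support at once and the quantifier no longer distributes over the relevant disjunction; that regime is handled by a separate argument.
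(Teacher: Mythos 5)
Your proof is correct and takes essentially the same route as the paper, whose entire justification for this corollary is the preceding remark that one can ``simply apply the algorithm for each voter.'' You additionally make explicit the step the paper leaves implicit --- that under best-case tie-breaking a two-voter plurality election is won by $c_1$ as soon as a single voter supports it, so the existential quantifier over issue sets distributes over the per-voter disjunction and two independent calls to \textsc{Single Issue Win} (combined by OR) suffice.
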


Next, we show that the problem is in P for one and two voters even with worst-case
tie-breaking, although the algorithmic approach is quite different.
For worst-case tie-breaking, we propose the following \textsc{Agree On Issues} algorithm:
\begin{algorithm}
Let $S_{agree}$ be the set of all issues on which $c_1$ agrees with $v$.
If $c_1$ wins over each other candidate $c_j$ when issues are restricted to $S_{agree}$, return YES. Otherwise, return NO.
\end{algorithm}
\begin{theorem}\label{thm:aoi_algo_correct}
The \textsc{Agree On Issues} algorithm solves \textsc{Single-Voter \biscabr} with worst-case tie-breaking.
\end{theorem}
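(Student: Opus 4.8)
\emph{Proof proposal.} The plan is to establish the single fact that $c_1$ can be made to win on \emph{some} nonempty highlighted issue set if and only if $c_1$ wins on the particular set $S_{agree}$; correctness of \textsc{Agree On Issues} follows immediately. First I would fix notation adapted to the binary domain. For a nonempty $S \subseteq [1:\ell]$ and any candidate $c_j$ we have $\|v^S - c_j^S\|_p^p = |\{k \in S : v_k \ne c_{jk}\}| = |S| - a_j(S)$, where $a_j(S) := |\{k \in S : v_k = c_{jk}\}|$ counts the highlighted issues on which $c_j$ agrees with $v$. Since ties are resolved against $c_1$ in the worst case, $c_1$ wins when issues are restricted to $S$ exactly when $a_1(S) > a_j(S)$ for every $j \in [2:m]$; call such an $S$ \emph{winning}. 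Note finally that $a_1(S) = |S \cap S_{agree}|$ by definition of $S_{agree}$, so in particular $a_1(S_{agree}) = |S_{agree}|$.

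The easy direction: if \textsc{Agree On Issues} returns YES then $S_{agree}$ is winning, and it is nonempty because a winning $S_{agree}$ satisfies $|S_{agree}| = a_1(S_{agree}) > a_j(S_{agree}) \ge 0$; hence the instance is a yes-instance. The substance is the converse — that whenever the algorithm returns NO (i.e.\ $S_{agree}$ is not winning) no nonempty $S$ is winning. The crux is two monotonicity observations about the margins $a_1(S) - a_j(S)$. (i) Deleting an issue $k \notin S_{agree}$ from $S$ leaves $a_1$ unchanged and weakly decreases every $a_j$, so it weakly increases every margin; thus if $S$ is winning so is $S \setminus \{k\}$. (ii) Adding an issue $k \in S_{agree}$ to $S$ increases $a_1$ by exactly $1$ and increases each $a_j$ by at most $1$, so it again weakly increases every margin; thus if $S$ is winning so is $S \cup \{k\}$. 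Now suppose some nonempty $S^*$ is winning. Applying (i) to each issue of $S^* \setminus S_{agree}$ shows $S^* \cap S_{agree}$ is winning, and this set is nonempty since $a_1(S^* \cap S_{agree}) = a_1(S^*) > a_j(S^*) \ge 0$. Then applying (ii) to each issue of $S_{agree} \setminus S^*$ shows $S_{agree}$ itself is winning — contradiction. Hence the algorithm's NO output is always correct.

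It remains to dispose of the degenerate case $S_{agree} = \emptyset$, i.e.\ $v$ disagrees with $c_1$ on every issue: then $a_1(S) = 0$ for every $S$, so no $S$ is winning, the instance is a no-instance, and the algorithm correctly returns NO (restricting to the empty set makes all distances $0$, a tie that $c_1$ loses under worst-case breaking). The argument requires no nontrivial computation; the one point needing care is the interaction between the nonemptiness requirement on $S$ and worst-case tie-breaking, since it is exactly the strict inequalities forced by worst-case ties that make the ``intersect with $S_{agree}$, then pad back up to $S_{agree}$'' normalization legitimate. I expect verifying that $S^* \cap S_{agree} \ne \emptyset$ and that strictness survives steps (i) and (ii) to be the main, though quite mild, obstacle.
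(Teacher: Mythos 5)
Your proposal is correct, but it reaches the conclusion by a genuinely different route than the paper. The paper's proof works with a single blocking candidate: if the algorithm returns NO, some $c_j$ ties or beats $c_1$ when the issues are restricted to $S_{agree}$; since $c_1$'s distance to $v$ on $S_{agree}$ is zero, $c_j$'s distance must be zero as well, so $c_j$ agrees with $v$ on all of $S_{agree}$ and therefore agrees with $v$, issue by issue, at least wherever $c_1$ does. It follows that $c_j$'s distance to $v$ is at most $c_1$'s on \emph{every} subset $S$, so under worst-case tie-breaking $c_1$ can never win --- one witness candidate survives every restriction. You instead prove the stronger structural fact that $S_{agree}$ simultaneously maximizes every margin $a_1(S) - a_j(S)$ over all nonempty $S$, via the two exchange steps (delete issues outside $S_{agree}$, then add back the issues of $S_{agree}$), and obtain correctness as the contrapositive. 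Both arguments are sound. The paper's is shorter and avoids any induction over issue sets; yours buys a cleaner universal-optimality statement about $S_{agree}$, and it makes explicit two points the paper glosses over --- the interaction between the nonemptiness requirement on $S$ and the strict inequalities forced by worst-case ties (your check that $S^* \cap S_{agree} \neq \emptyset$), and the degenerate case $S_{agree} = \emptyset$. No gaps.
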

\begin{proof}
It suffices to consider the case when we return NO.
Suppose there is some $c_j$ that wins when we restrict to $S_{agree}$.
Then it must be that $c_j$ also agrees with $v$ on all issues in
$S_{agree}$ (and any subset thereof).
Consider an arbitrary subset of issues $S$, and let $x_{jk} = 1$ if
$j$ agrees with $v$ on issue $k$.
$c_j$'s difference from $v$ is then $\sum_{k \in S \cap S_{agree}}
x_{jk} + \sum_{k \in S - S\cap  S_{agree}} x_{jk} \ge |S \cap
S_{agree}|$.
Since the difference between $c_i$ and $v$ is $|S \cap
S_{agree}|$, the result follows.
\end{proof}

  The same approach is also applicable to \textsc{2-Voter BIS}.
  \begin{corollary}\label{2v-bis-wc-poly}
    \textsc{2-Voter \biscabr} with the wost-case tie-breaking is poly-time solvable.
  \end{corollary}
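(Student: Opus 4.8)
The plan is to reduce two-voter worst-case \biscabr\ to the requirement ``$c_1$ must be the unique closest candidate to \emph{both} voters'', and then to run a variant of the \textsc{Agree On Issues} algorithm in which the per-voter agreement set $S_{agree}$ is replaced by the \emph{common} agreement set $S^{\ast}=\{k : c_{1k}=v_{1k}=v_{2k}\}$: compute $S^{\ast}$, return NO if it is empty, and otherwise return YES iff every rival $c_j$ ($j\ge 2$) differs from $c_1$ on at least one issue of $S^{\ast}$ (equivalently, $c_1$ strictly beats every rival on both voters when issues are restricted to $S^{\ast}$). This runs in polynomial time, so the work is entirely in the correctness argument. (For $m=1$ the control problem is vacuous, so assume $m\ge 2$.)

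First I would record the reduction. With $n=2$ and worst-case tie breaking, $c_1$ wins on a highlighted set $S$ iff it receives both votes: if it receives at most one vote then some rival receives at least one as well, and the resulting tie among candidates is broken against $c_1$. Hence $c_1$ wins on $S$ iff $c_1$ is strictly closer than every other candidate to each of $v_1$ and $v_2$ under $S$; since on binary issues the $l_p$ distance is strictly increasing in the number of disagreements, ``closer'' may be read throughout as ``fewer disagreements'', i.e.\ smaller $d_S(v,c)=|\{k\in S:v_k\ne c_k\}|$.

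The heart of the proof is the direction: if some nonempty $S$ makes $c_1$ strictly win both voters, then so does $S^{\ast}$. Fix a rival $c_j$ and let $D_j=\{k:c_{jk}\ne c_{1k}\}$. On an issue $k\in D_j$ exactly one of $c_1,c_j$ agrees with a given voter, so $\mathbbm{1}(v_{ik}\ne c_{jk})-\mathbbm{1}(v_{ik}\ne c_{1k})$ equals $+1$ when $c_1$ agrees with $v_i$ on $k$ and $-1$ otherwise, while outside $D_j$ this difference is $0$. Writing $\alpha_k,\beta_k\in\{-1,+1\}$ for these signs relative to $v_1$ and $v_2$, the two winning margins are
\[
d_S(v_1,c_j)-d_S(v_1,c_1)=\sum_{k\in S\cap D_j}\alpha_k,
\qquad
d_S(v_2,c_j)-d_S(v_2,c_1)=\sum_{k\in S\cap D_j}\beta_k,
\]
and, being positive integers, both are at least $1$. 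Adding them, each $k\in S\cap D_j$ contributes $\alpha_k+\beta_k$, which is $2$ precisely on $S^{\ast}$, $-2$ on issues where $c_1$ disagrees with both voters, and $0$ otherwise; hence $2\,|S\cap D_j\cap S^{\ast}|\ge 2$, so $|S^{\ast}\cap D_j|\ge 1$. Applying this with $j=2$ gives $S^{\ast}\ne\emptyset$, and then $d_{S^{\ast}}(v_i,c_1)=0<|S^{\ast}\cap D_j|=d_{S^{\ast}}(v_i,c_j)$ for all $i\in\{1,2\}$, $j\ge 2$, so $c_1$ strictly wins both voters on $S^{\ast}$. The converse is immediate ($S^{\ast}$ is its own witness), which completes the correctness proof.

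The main obstacle is the coupling between the two voters: a single highlighted set has to serve $v_1$ and $v_2$ at once, so the single-voter canonical sets $S_{agree}(v_1)$ and $S_{agree}(v_2)$ guaranteed by Theorem~\ref{thm:aoi_algo_correct} generally differ and one cannot simply run that algorithm twice. The margin-summation step is what defeats this, showing that any feasible highlighting already forces an ``agrees-with-both'' issue into $D_j$ for every rival, so collapsing down to $S^{\ast}$ costs nothing. An equivalent route is a two-step exchange argument --- adding an issue on which $c_1$ agrees with both voters never hurts $c_1$, and deleting one on which $c_1$ disagrees with both never hurts --- which lets one assume from the start that $S\supseteq S^{\ast}$ and that $S$ avoids the ``disagrees-with-both'' issues, after which the same arithmetic finishes the argument.
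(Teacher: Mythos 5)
Your proposal is correct and follows essentially the same route as the paper: the algorithm is identical (test whether $c_1$ beats every rival on the common agreement set $S_{agree}=S^{\ast}$ of issues where $c_1$ agrees with \emph{both} voters, after reducing worst-case two-voter winning to strictly capturing both votes). The only difference is that where the paper merely says ``similarly to Theorem~\ref{thm:aoi_algo_correct},'' you supply the actual argument --- the margin-summation showing every feasible $S$ must contain an agrees-with-both issue in each $D_j$ --- which is a genuine (and needed) elaboration rather than a different approach.
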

  \begin{proof}
    For the candidate $c_1$ to win, {\bf both} voters must support
    her. Without loss of generality, we can assume that $c_1$ opinion on all isues is $1$. Let $S_{agree}$ be the set of all issues on which $c_1$ agrees with both voter $v_1$ {\bf and} $v_2$. Similarly to Theorem~\ref{thm:aoi_algo_correct}, if $c_1$ does not win against each other candidate $c_j$ over the set $S_{agree}$, then no other subset of issues will achieve $c_1$'s win. 
  \end{proof}

Remarkably, 
while BSIC with 1 and 2 voters are efficiently solvable for both
best-case and worst-case tie-breaking, with
3 voters we see a qualitative difference in complexity, depending on
how ties are broken.
First, we observe that the 3-voter case with worst-case tie-breaking is tractable.
  \begin{corollary}\label{thm:3v-bis-wc-poly}
    \textsc{3-Voter Binary Issue Selection} with the worst-case tie-breaking is poly-time solvable.
  \end{corollary}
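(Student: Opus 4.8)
The plan is to reduce the three‑voter problem to a constant number of instances of the two‑voter problem already solved in Corollary~\ref{2v-bis-wc-poly}. The key structural observation I would establish first is that, with exactly three voters and worst‑case tie‑breaking, $c_1$ wins a highlighted election (on some nonempty $S$) if and only if at least two of the three voters \emph{strictly} prefer $c_1$ to every other candidate under $S$. The ``if'' direction is immediate: two strict votes for $c_1$ leave the third voter contributing at most one vote to some single other candidate, so $c_1$ has $2$ votes against $\le 1$ for everyone else and is the unique plurality winner. For ``only if'', suppose at most one voter strictly prefers $c_1$; then $c_1$ receives at most one vote (undecided voters are handed to non‑$c_1$ candidates in the worst case), while the remaining $\ge 2$ votes are distributed among the other candidates, so either some opponent gets $\ge 2$ votes or $c_1$ is in a tie at one vote — and in both cases worst‑case tie‑breaking denies $c_1$ the election.

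Next I would turn this characterization into a pair‑enumeration reduction. There is a nonempty $S$ on which $c_1$ wins iff there is a pair $\{v_a,v_b\}\subseteq\{v_1,v_2,v_3\}$ and a nonempty $S$ on which \emph{both} $v_a$ and $v_b$ strictly prefer $c_1$ to every $c_j$, $j\in[2:m]$: one direction follows from the characterization above (pick any two of the $\ge 2$ strict supporters), and the other is immediate since two strict $c_1$‑supporters among three voters always win. But ``both $v_a$ and $v_b$ strictly prefer $c_1$ over all other candidates on $S$'' is exactly the condition under which $c_1$ wins the two‑voter election on $\{v_a,v_b\}$ with worst‑case tie‑breaking (note the candidate set, hence the set of opponents $c_2,\dots,c_m$, is untouched), which is precisely the problem decided in polynomial time by Corollary~\ref{2v-bis-wc-poly} (equivalently, via the \textsc{Agree On Issues}‑style test of Theorem~\ref{thm:aoi_algo_correct} extended to two voters: restrict to the issues on which $c_1$ agrees with both voters of the pair, and check whether $c_1$ beats every $c_j$ on that restricted set).

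The algorithm is therefore: for each of the three pairs of voters, invoke the polynomial‑time procedure of Corollary~\ref{2v-bis-wc-poly}; answer YES iff some pair succeeds. Each invocation is polynomial and there are only three of them, so the whole procedure runs in polynomial time.

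I expect the only substantive step to be the first one — pinning down the ``$\ge 2$ strict votes'' characterization, and in particular verifying carefully that under adversarial tie‑breaking a pair of strict $c_1$‑votes is both necessary and sufficient for $c_1$ to be the \emph{unique} winner among three voters (handling undecided voters and candidate‑level ties). Once that is established, the correctness of the pair‑enumeration reduction and the polynomial running time are routine.
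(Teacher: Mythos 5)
Your proof is correct and takes essentially the same route as the paper's: enumerate the three voter pairs, invoke the polynomial-time two-voter procedure on each, and answer YES iff some pair can be won, since two strict supporters out of three voters guarantee a unique plurality win while at most one vote for $c_1$ under adversarial tie-breaking guarantees a loss. You make explicit the ``at least two strict supporters'' characterization that the paper leaves implicit, and you correctly invoke the worst-case two-voter corollary (Corollary~\ref{2v-bis-wc-poly}) where the paper's own text cites the best-case one (Corollary~\ref{C:bsic-bc}), which appears to be a typo there.
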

  \begin{proof}
   By Corollary~\ref{C:bsic-bc} we can test in poly-time whether any
   given pair of voters can be won over by $c_1$.
Applying this to each of the three possible pairs of voters, we can determine in poly-time whether the support of any two voters can be obtained simultaneously. If so, then $c_1$ can be made to win. Otherwise no subset of issues will make $c_1$ the winner. 
  \end{proof}
Now, we show that the problem becomes hard with best-case
tie-breaking even with only 3 voters.
  \begin{theorem}\label{thm:3v-bis-bc-npc}
    \textsc{3-Voter Binary Issue Selection} with the best-case tie-breaking is NP-hard.
  \end{theorem}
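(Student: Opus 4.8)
The plan is to reduce from an NP-complete satisfiability problem --- \textsc{Not-All-Equal 3-SAT} fits the structure of the binary model particularly well --- with highlighting a set $S$ of issues playing the role of a truth assignment. Throughout, take $c_1=\mathbf 0$ and record the basic dictionary of the binary model: a competitor $c_i$ differs from $c_1$ only on the issues $D_i=\{k:c_{ik}=1\}$, and on those issues a voter sides with whomever it agrees with, so ``$c_1$ (weakly) beats $c_i$ for voter $v$ on $S$'' is precisely the balance inequality $|S\cap A_v\cap D_i|\ge|S\cap B_v\cap D_i|$, where $A_v$ (resp.\ $B_v$) is the set of issues on which $v$ agrees (resp.\ disagrees) with $c_1$. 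For a single voter every such inequality has all of its ``$+$'' columns inside the one fixed set $A_v$, which is exactly why \textsc{Single-Voter \biscabr} is in P; the hardness must be squeezed out of the interaction of two voters $v_1,v_2$, whose partitions differ.

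Concretely I would classify each issue by its pair $(v_{1k},v_{2k})$ and use only issues of type $(0,1)$ (``assignment/literal'' issues, with selected-indicator $x_k$) and type $(1,0)$ (``guard/slack'' issues): type-$(0,0)$ issues let $c_1$ win by highlighting them alone, and type-$(1,1)$ issues are never worth highlighting. On a nonempty $S$, a candidate with difference set $D_c=P_c\cup N_c$ ($P_c$ its type-$(0,1)$ issues, $N_c$ its type-$(1,0)$ issues) is beaten by $c_1$ for $v_1$ iff $x(P_c)\ge x(N_c)$ and for $v_2$ iff $x(N_c)\ge x(P_c)$; so ``$c_1$ wins both $v_1$ and $v_2$'' becomes ``$x=\mathbf 1_S$ satisfies the homogeneous equation $x(P_c)=x(N_c)$ for every candidate $c$.'' With a global guard $z$ of type $(1,0)$, a candidate forcing $x_{p_i}+x_{n_i}=x_z$ for each variable (so any nonzero solution has $z=1$ and then every variable is consistently assigned), and, for each NAE-clause, a candidate forcing $x_{\ell_a}+x_{\ell_b}+x_{\ell_c}=x_{s_C}+x_z$ with a fresh type-$(1,0)$ slack $s_C$ (so, when $z=1$, between one and two of the clause's literals are selected), a nonzero satisfying $x$ exists iff the NAE instance is satisfiable. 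Since $x\ne\mathbf 0$ is exactly $S\ne\emptyset$, this shows ``some nonempty $S$ lets $c_1$ win both $v_1$ and $v_2$'' is NP-hard.

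To make that the operative condition of a genuine $3$-voter plurality election, add $v_3=\mathbf 1$ together with a blocker candidate $c_{\mathrm{blk}}=\mathbf 1$. Because $c_1$ disagrees with $v_3$ on every issue, on any nonempty $S$ the voter $v_3$ is strictly closer to $c_{\mathrm{blk}}$ than to $c_1$, so $c_1$ can never take $v_3$. Hence $c_1$ can only reach the two votes it needs by taking both $v_1$ and $v_2$ --- which, by the gadget, happens for some nonempty $S$ exactly when the NAE instance is satisfiable --- and the forward and backward directions of the reduction follow. (One also has to check that $c_{\mathrm{blk}}$, which is forced to differ from $c_1$ on every issue, contributes only a benign extra balance equation, and to carry it through the book-keeping of the gadget.)

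The step I expect to be the main obstacle is ruling out the one remaining way $c_1$ could win: a three-way $(1,1,1)$ tie in which $c_1$ wins with a single vote --- a situation that cannot occur with one or two voters, which is precisely why the complexity jumps at three. For instance if $S$ is a single type-$(0,1)$ issue then $c_1$ wins $v_1$ outright, loses $v_2$, $v_3$ votes for $c_{\mathrm{blk}}$, and $v_2$ may well go to some other candidate, handing $c_1$ a spurious tie-win that has nothing to do with the NAE instance. The reduction therefore has to be reinforced so that whenever $c_1$ fails to take $v_2$ (resp.\ $v_1$), that voter is driven to the same candidate $v_3$ uses, so the non-$c_1$ votes collide and the tie is denied; the naive fix --- making $c_{\mathrm{blk}}$ strictly dominate every competitor for $v_2$ --- over-constrains the candidates' difference sets and kills the gadget, so a more careful device (extra candidates, or a subtler choice of $v_3$) is needed. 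Getting this to go through cleanly, rather than by a sprawling case analysis, is where I expect the proof to require the most care. The secondary difficulty is the one already flagged: unlike the real-valued setting of Theorem~\ref{thm:svis_bc_npc} there is no ``forced extra issue'' device here (a single binary voter is in P), so every constraint must be a weak $\pm1/0$ balance inequality, which is exactly why the reduction has to route through the two-voter coupling and a problem --- such as \textsc{Not-All-Equal 3-SAT} --- that homogeneous balance equations can express.
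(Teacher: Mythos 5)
Your architecture is essentially the paper's: an unwinnable third voter plus an all-different ``blocker/reference'' candidate to force $c_1$ to need both $v_1$ and $v_2$; two complementary voters so that the pair of weak inequalities against each candidate collapses to an exact balance equation on the highlighted set; and an NP-hard problem encoded in those equations. The paper routes through \textsc{Exact 3-Cover} rather than \textsc{Not-All-Equal 3-SAT}, but the skeleton (all-ones vs.\ all-zeros reference candidates, two special ``forced'' issues playing the role of your guard $z$, and per-set/per-element candidates in place of your per-variable/per-clause candidates) is the same. However, as written your proposal has two concrete gaps. First, the $(1,1,1)$-tie case that you flag is genuinely unresolved: under the paper's convention that ties among candidates are broken in $c_1$'s favor, $c_1$ taking $v_1$ while $v_2$ and $v_3$ scatter to two \emph{different} candidates is a win, so the claim ``$c_1$ wins iff it takes both $v_1$ and $v_2$'' --- the load-bearing step of the whole reduction --- is not established. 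You correctly diagnose this and correctly observe that the naive fix breaks the gadget, but you do not supply the repair, so the backward direction of the reduction is open. (For what it is worth, the paper's own write-up simply asserts that $w$ must gain both $v_1$ and $v_2$ without addressing this case either.)

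Second, the blocker $c_{\mathrm{blk}}=\mathbf 1$ is not ``benign.'' Its difference set is the whole issue set, so it contributes the global equation $|S\cap\text{type}(0,1)|=|S\cap\text{type}(1,0)|$. Summing your gadget equations, the left side is $n_{\mathrm{var}}\cdot x_z$ and the right side is $x_z+\sum_C x_{s_C}$, so with $x_z=1$ the blocker forces $\sum_C x_{s_C}=n_{\mathrm{var}}-1$, i.e.\ a fixed total count of ``clauses with exactly two true literals'' --- a nontrivial global constraint on the NAE assignment that an arbitrary satisfiable instance need not meet. You would need explicit padding issues (with accompanying candidates of both types) to absorb this, and that padding must itself not create new ways to win. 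The paper's X3C encoding is arranged so that the analogous global balance holds automatically: a cover selects exactly $t/3$ set-issues matched by $t/3$ element-issues, plus the two forced special issues. So the proposal identifies the right structure and the right obstacle, but both the tie case and the blocker's global equation need to be actually discharged before this is a proof.
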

  \begin{proof}
    The proof relies on a reduction from the \textsc{Exact 3-Cover (X3C)} problem. An instance of X3C is governed by $t$ -- number of elements, $s$ -- the number of sets. In the reduced instance we will denote by $w$ the preferred candidate (and assume that his opinion on all issues is $1$), $c$ -- the candidate whose opinion on every issue is $0$ (zero), $v_3$ -- the voter whose opinion on every issue is $0$. This implies that to win the election $w$ should gain the support of both voters $v_1$ and $v_2$. In addition we will denote by $r$ the number of issues in the reduced instance, setting it to $r=s+t+2$. Finally, we will set the number of candidates to $m=t+4$ and name them $c_1,\dots,c_t,x,y,c,w$.

The preferences of $v_1$ and $v_2$ over the $r$ issues are as follows:
$\begin{array}{rcccc}
  v_1: &1\dots 1&0\dots 0&1&0\\
  v_2: &\underbrace{0\dots 0}_{s}&\underbrace{1\dots 1}_{t}&0&1
\end{array}
$

Preferences of candidates take a more complex form
\begin{itemize}
\item \emph{For issues from $1$ through $s$}. These preferences will encode the X3C instance. In particular, candidates $c_i,c_j,c_e$ will have opinion $1$ on the $k^{th}$ issue if and only if the $k^{th}$ set in the X3C instance is $\{i,j,e\}=S_k$. Otherwise the opinion of these three candidates on the $k^{th}$ issue will be $0$ (zero).
\item \emph{On issues $s+1$ through $s+t$} all candidates $c_1,\dots,c_t$ have $0$ (zero) opinion.
\item \emph{On the $s+t+1$ issue} all candidates $c_1,\dots,c_t$ have opinion $0$ (zero)
\item \emph{On $s+t+2$ issue} all candidates $c_1,\dots,c_t$ have opinion $1$
\item \emph{Candidate $y$} has opinion $1$ on issues $1,\dots,s+t$ and opinion $0$ (zero) on the issues $s+t+1$ and $s+t+2$
\item \emph{Candidates $x$} has opinions in the complete opposion to candidate $y$
\end{itemize}

Let us now show that if we have a solution to the resulting \isc problem, we can recover a solution for the original X3C instance.

Candidate $c$, with all his opinions set to $0$ (zero), serves as a kind of reference for voters. Thus, given a selection $S$ of issues, the preferred candidate $w$ will gain the support of a voter only if they agree on at least as many issues in $S$ as they disagree. As a result, \iscabr  solution should contain equal number, $q$, of issues from the set $\{1,\dots,s,s+t+1\}$ and from the set $\{s+1,\dots,s+t,s+t+2\}$. Consequently, candidate $w$ will agree with any voter on exactly $q$ issues.

Notice that both the issue $s+t+1$ and $s+t+2$ must be selected in a solution to the $\isc$. To see this consider the follwoing two cases
\begin{itemize}
\item \emph{Neither $s+t+1$, nor $s+t+2$ are in the solution set, $S$ of issues}. Still, an equal number of elements (denoted earlier by $q$) must be selected from the sets of issues $\{1,\dots,s\}$ and $\{s+1,\dots,s+t\}$ for the solution set $S$. Wlog., issue $1\in S$. Then voter $v_1$ agreed with the candidate $c_{i_1}$ on $q+1$ issues ($q$ issues from the set $\{s+1,\dots,s+t\}$ and issue $1$). As a result, voter $v_1$ would \emph{not} vote for candidate $w$. Thus $S$, that does not contain neither $s+t+1$ nor $s+t+2$, can not be a valid solution to our \biscabr instance.
\item \emph{Only one among issues $s+t+1$ and $s+t+2$ is selected as a part of the solution set of issues $S$.} If it is the issue $s+t+1$, then voter $v_1$ agreed with the candidate $x$ on $q+1$ issues and with candidate $w$ on $q$ issues only. Thus, $v_1$ would not vote for $w$, and $S$ is not a valid soluion. Similarly, if $s+t+2$ was selected, then candidate $y$ will win the support of $v_2$, once again preventing $w$ from winning.
\end{itemize}

Now, with both issues $s+t+1$ and $s+t+2$ chosen, let us show how we can obtain a solution to the original X3C problem from the solution set of issues $S$ to the reduced \biscabr problem. The set of issues $S$ makes candidate $w$ the winner of the election. Let $\{i_1,\dots,i_{q-1}\}=S\cap \{1,\dots,s\}$. We will show that the collection $S_{i_1},\dots,S_{i_{q-1}}$ is the solution to the original X3C instance.

\begin{enumerate}
\item If \emph{there is an element $j$ that belongs to two different sets in the collection $S_{i_1},\dots,S_{i_{q-1}}$,} then $v_1$ agrees with $c_j$ on 2 issues from $i_1,\dots,i_{q-1}$ and on $q-1$ issues from $\{s+1,\dots,s+t\}$. Totalling $q+1$ agreements between $v_1$ and $c_j$. Which implies that $v_1$ will \emph{not} vote for $w$, and contradicts $w$ being the winner.
\item If \emph{there exists an element $j$ that does not belong to any set in the collection $S_{i_1},\dots,S_{i_{q-1}}$,} then $c_j\in C\setminus\{c_{j_i},c_{k_i},c_{e_i}\}$ for all $i\in\{i_1,\dots,i_{q-1}\}$. As a consequence $v_2$ agrees with $c_j$ on $q-1$ issues from the set of issues $\{1,\dots,s\}$ and on both issues $s+t+1$ and $s+t+2$. This totals $q+1$ agreements between $v_2$ and $c_j$, entailing that $v_2$ will \emph{not} vote for $w$, contradicting $w$ being the winner.
\end{enumerate}
As a result, the collection $S_{i_1},\dots,S_{i_{q-1}}$ constructed from the \biscabr solution $S$ is a proper solution to the original X3C instance, i.e. every element belong to 1 and only 1 set.

Let us now show that a solution to the X3C instance can be translated into a solution to the \biscabr reduction instance. 

Let $S_{i_1},\dots,S_{i_k}$ be a legal solution to the X3C instance. Then set the selection of issues $S=\{i_1,\dots,i_k\}\cup\{s+1,\dots,s+k\}\cup\{s+t+1,s+t+2\}$. Notice that $k$ is the number of elements in the X3C instance, and therefore $k=\frac{t}{3}$ and $s+k<s+t$.

By the choice of $i_1,\dots,i_k$, it must hold that $v_1$ agrees with every candidate $c_j$ once on issues $i_1,\dots,i_k$ and $\frac{t}{3}$ times on issues $s+1,\dots,s+k,s+t+1,s+t+2$. Overall $v_1$ and $c_j$ agree on $\frac{t}{3}+1$ issues. Candidate $x$ agrees with $v_1$ on issues $s+1,\dots,s+k,s+t+1$ only, totalling $\frac{t}{3}+1$ agreements as well. Similarly, candidates $c$ and $y$ rake in $\frac{t}{3}+1$ agreements. Thus, by the tie-breaking rule, $v_1$ votes for $w$.

Similarly, $v_2$ is matched with the opinion of $c_j$ over $\frac{t}{3}-1$ issues from the set $\{i_1,\dots,i_k\}$ and 2 more matches are produced over issues $s+t+1, s+t+2$. This totals $\frac{t}{3}+1$ matches between $c_j$ and $v_2$. Similarly to $v_1$, $v_2$ also agrees with $x,y$ and $c$ on $\frac{t}{3}+1$ issues. Again, tie-breaking will decide in favour of $w$.
Thus $w$ has the support of both $v_1$ and $v_2$ and becomes the winner.

We conclude that the original X3C instance has a solution if and only if the reduction instance of \biscabr has a solution.
    \end{proof}

\subsection{Binary Issue Selection with Two Candidates}

With an arbitrary number of voters and only two candidates, even the
\biscabr problem with best-case tie-breaking is hard.
\begin{theorem}\label{thm:bis-bc-npc}
	With two candidates, \biscabr with best-case tie-breaking is NP-complete.
\end{theorem}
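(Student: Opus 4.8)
\textit{Proof proposal.} The plan is to show membership in NP and then NP-hardness by reduction from an exact-cover--style problem. Membership is immediate: a nonempty set $S$ of highlighted issues is a polynomial-size certificate, and given $S$ one computes, for each voter $v_j$, whether $v_j$ agrees with $c_1$ on at least half of the issues in $S$ (equivalently $\|v_j^S-c_1^S\|_p\le\|v_j^S-c_2^S\|_p$), then tallies the result, all in polynomial time. It therefore suffices to produce hard instances in a convenient normal form. Note first that an issue on which $c_1$ and $c_2$ coincide is useless: highlighting only such an issue makes every voter indifferent and, under best-case tie breaking, hands the election to $c_1$; so a hard instance can be taken to have $c_1$ and $c_2$ disagreeing on every issue, and after relabeling positions we may assume $c_1=\mathbf{1}$, $c_2=\mathbf{0}$. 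In this normal form, exactly as in the two-candidate discussion preceding Theorem~\ref{thm:2c_isc_bc_npc}, $v_j$ votes for $c_1$ on $S$ iff $\sum_{k\in S}M_{jk}\ge 0$, where $M\in\{-1,+1\}^{n\times\ell}$ has $M_{jk}=+1$ exactly when $v_j$ agrees with $c_1$ on issue $k$. So the target problem is: given $M\in\{-1,+1\}^{n\times\ell}$, is there a nonempty column set $S$ for which at least $\lceil n/2\rceil$ rows have nonnegative sum over $S$?

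For the reduction I would start from \textsc{Exact 3-Cover (X3C)} (as in Theorem~\ref{thm:3v-bis-bc-npc}), or an equivalent variant, identifying issues with the sets of the X3C instance and voters with its elements, and then add two kinds of auxiliary machinery. First, a block of \emph{dummy voters} who agree with $c_2$ on every issue: they never support $c_1$, and adding them in the right number pins the plurality threshold so that $c_1$ wins iff \emph{every} element-voter supports $c_1$ --- the same calibration trick used in the proof of Theorem~\ref{thm:2c_isc_bc_npc}. Second, a small number of \emph{control issues}, engineered so that they must be included (or must be excluded) in any winning $S$; these are used (a) to kill degenerate winning sets --- in particular a one-issue $S$ on which everyone ties --- and (b) to shift the arithmetic so that, for an element-voter, the row sum over $S$ reduces to a clean comparison between ``number of selected sets containing that element'' and ``half of $|S|$'', which is exactly the exact-cover condition (each element covered exactly once, forcing $|S|$ to be the cover size).

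The step I expect to be the main obstacle is precisely this last calibration. The binary restriction forces $M$ to be a pure $\pm1$ matrix with no zero entries, and every voter's decision depends on the sum over \emph{all} of $S$; there is no analogue of the large-weight special issue $\ell+1$ that carried the real-valued reductions, so one cannot make a gadget voter ``see'' only a few issues. Hence each element-voter's constraint couples to every highlighted set-issue, and the many set-issues irrelevant to a given element each contribute $-1$. The construction must therefore build a global balance --- through the control issues and through the choice of which candidate each voter agrees with on the ``filler'' coordinates --- so that these irrelevant contributions cancel and only the intended count survives, \emph{simultaneously} for all element-voters. Once that is in place, the forward direction (an exact cover yields a winning $S$) is a direct computation, and the converse requires showing that the control issues force any winning $S$ into the intended shape so that $S$ restricted to the set-issues projects to an exact cover; the remaining ingredients (NP membership, the dummy-voter count, the final tally, and checking that the reduction is polynomial and that no $l_p$-specific subtlety arises, since on $\{0,1\}$ every $l_p$ norm just counts disagreements) are routine.
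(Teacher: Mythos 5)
Your NP-membership argument and the normalization to $c_1=\mathbf{1}$, $c_2=\mathbf{0}$ with a $\pm1$ agreement matrix $M$ are fine and match the paper's setup. But the heart of the proof --- the actual reduction gadget --- is missing, and you say so yourself: the ``calibration'' step that makes each element-voter's row sum over $S$ collapse to the exact-cover condition is precisely the part you defer, and with your chosen orientation (issues $=$ sets, voters $=$ elements) it is genuinely problematic. An element of a 3-set system lies in only a few sets, so almost every selected set-issue contributes $-1$ to that element's row; to cancel these you would need per-voter filler agreements, but any filler coordinate is seen by \emph{all} voters at once, so the ``global balance \dots simultaneously for all element-voters'' that you invoke is not something you have exhibited, and it is not clear it exists in this encoding. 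As written, the proposal is a plan whose decisive gadget is left as an acknowledged open step, so it does not constitute a proof.

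The paper escapes exactly this difficulty by reversing the roles and reducing from \textsc{Hitting Set}: the first $\ell-k$ issues are the \emph{elements}, each set of the instance becomes a \emph{voter} (replicated in $k$ sub-blocks), and $k$ extra issues together with two complementary ``balance'' voters force any winning $S$ to contain equally many element-issues and extra issues (hence at most $k$ element-issues). Each sub-block is given exactly one disagreement among the extra issues, so a set-voter supports $c_1$ iff it agrees with at least \emph{one} chosen element-issue --- precisely the hitting condition, a far more forgiving target than ``agree on at least half.'' A block of $ks+2$ all-zero voters pins the plurality threshold, which is the same dummy-voter idea you propose and is correct. To salvage your proposal, adopt this orientation (issues as elements, voters as sets, with a parity/balance block) rather than searching for a global cancellation in the sets-as-issues encoding.
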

\begin{proof}
It is evident that \biscabr problem is in NP, so we only need to show that it is NP-hard. We will do so by a reduction from \textsc{Hitting Set}, where $p$ denotes the number of elements, $s$ -- the number of sets, and $k$ --- the number of elements which should be chosen as the hitting set. We construct a profile for \biscabr problem with 2 candidates, $\ell$ issues and $n$ voters, where $\ell$ is such that $\ell=p+k$ and $n=2ks+4$.

We assume that the preferred candidate is $c_1$ and set his opinion to $1$ on all issues. All opinions of his rival, $c_2$, are set to $0$ (zero). We then arrange voters into $3$ blocks, as follows:

\begin{itemize}
\item {\bf [Block 1.]} Two voters. 
The first one has opinion $0$ (zero) for issues from $1$ through issue $\ell-k$, and opinion $1$ for issues from $\ell-k+1$ to $\ell$. The second voter has an opposite opinion wrt all issues.
\item {\bf [Block 2.]} Second block consists of $ks$ voters divided
  into k sub-blocks.
For every sub-block, opinions of voters on issues from $1$ to $\ell-k$ encode the hitting set problem instance. That is, voter $(f-1)s+i$ has opinion 1 on issue $j$ if and only if element $j \in s_i$ for all $f \in [1:k]$ . For issues from $\ell-k+1$ to $\ell$, all voters of the sub-block $f\in [1:k]$ will have the same $0$ (zero) opinion on issue $\ell-k+f$ and $1$ on all other issues. %
\item {\bf [Block 3.]} This block consists of $ks+2$ voters whose opinion on all issues is 0.
\end{itemize}
\leaveout{
\begin{table*}[ht]
{\center
\begin{tabular}{|cccc|cccc|cccc|ccc|cccc|}
\hline
\multicolumn{4}{|c|}{}&
\multicolumn{4}{c|}{{\bf Sub-block 1}}&
\multicolumn{4}{c|}{{\bf Sub-block 2}}&
\multicolumn{3}{c|}{{\bf ...}}&
\multicolumn{4}{c|}{{\bf Sub-block $k$}}\\\hline

\multicolumn{4}{|c|}{voters/issues}&
$1$&$2$&$\dots$&$s$&
$s+1$&$s+2$&$\dots$&$2s$&
\multicolumn{3}{c|}{$\dots$}&
$(k-1)s+1$&$(k-1)s+2$&$\dots$&$ks$\\\hline

\multicolumn{4}{|c|}{1}&
\multicolumn{4}{c|}{Hitting}&
\multicolumn{4}{c|}{Hitting}&
\multicolumn{3}{c|}{$\dots$}&
\multicolumn{4}{c|}{Hitting}\\

\multicolumn{4}{|c|}{2}&
\multicolumn{4}{c|}{Set}&
\multicolumn{4}{c|}{Set}&
\multicolumn{3}{c|}{$\dots$}&
\multicolumn{4}{c|}{Set}\\

\multicolumn{4}{|c|}{$\vdots$}&
\multicolumn{4}{c|}{Problem}&
\multicolumn{4}{c|}{Problem}&
\multicolumn{3}{c|}{$\dots$}&
\multicolumn{4}{c|}{Problem}\\

\multicolumn{4}{|c|}{$\ell-k$}&
\multicolumn{4}{c|}{Encoding}&
\multicolumn{4}{c|}{Encoding}&
\multicolumn{3}{c|}{$\dots$}&
\multicolumn{4}{c|}{Encoding}\\

\multicolumn{4}{|c|}{$\ell-k+1$}&
$0$&$0$&$\dots$&$0$&
$1$&$1$&$\dots$&$1$&
\multicolumn{3}{c|}{$\dots$}&
$1$&$1$&$\dots$&$1$\\

\multicolumn{4}{|c|}{$\ell-k+2$}&
$1$&$1$&$\dots$&$1$&
$0$&$0$&$\dots$&$0$&
\multicolumn{3}{c|}{$\dots$}&
$1$&$1$&$\dots$&$1$\\

\multicolumn{4}{|c|}{$\vdots$}&
$\vdots$&$\vdots$&$\dots$&$\vdots$&
$\vdots$&$\vdots$&$\dots$&$\vdots$&
\multicolumn{3}{c|}{$\dots$}&
$\vdots$&$\vdots$&$\dots$&$\vdots$\\

\multicolumn{4}{|c|}{$\ell$}&
$1$&$1$&$\dots$&$1$&
$1$&$1$&$\dots$&$1$&
\multicolumn{3}{c|}{$\dots$}&
$0$&$0$&$\dots$&$0$\\\hline
\end{tabular}\\
}
\caption{Theorem~\ref{thm:bis-bc-npc}. Voters' profile construction. Block 2.\label{thm:bis-bc-npc:prf:t1}}
\end{table*}
}

Let us now show the correctness of this reduction. Let $\{i_1,\dots,i_j\}$ be a set issues chosen to make $c_1$ the winner. Consider voters who support $c_1$. Evidently, nobody from Block 3 is among them --- no matter which issues were chosen, voters from Block 3 will support $c_2$. As a result, $c_2$ has at least $ks+2$ votes. Hence, all voters from Blocks 1\&2 should vote for $c_1$ to make him the winner.

Consider voters in Block 1. They both vote for $c_1$, therefore, $\{i_1,\dots,i_j\}$ consists of equal number of elements from both issue sets $[1:\ell-k]$ and $[\ell-k+1:\ell]$. Otherwise, there are (w.l.o.g.) more issues from $[1:\ell-k]$ than from $[\ell-k+1:\ell]$. Which implies that the second voter from Block-1 has more negative ($0$) opinions than positive ($1$), and he will vote for the candidate $c_2$. Additionally that means at most $k$ issues were picked from both sets. Denote this number by $r \leq k$.

W.l.o.g. issue $\ell-k+1$ is chosen from the set $[\ell-k+1:\ell]$. Thus, voters from the first sub-block of Block-2 have $r-1$ $1$'s and one $0$ as an opinion on issues in $[\ell-k+1:\ell]$. Therefore, all voters from this sub-block should have at least one positive ($1$) opinion on issues chosen from the issues set $[1:\ell-k]$. That is, these issues represent a hitting set with $r$ elements where $r \leq k$.

Similarly a solution for the \biscabr can be constructed from a given \textsc{Hitting Set} solution.

This proof is easy to adapt to worst-case tie-breaking.
\end{proof}
\begin{corollary}
	The \textsc{BMS} problem is NP-hard.
\end{corollary}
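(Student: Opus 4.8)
The plan is to show that \textsc{BMS} is at least as hard as the decision problem \biscabr, whose two-candidate, best-case variant is NP-complete by Theorem~\ref{thm:bis-bc-npc}. The point is that \textsc{BMS} syntactically contains the optimization version of \biscabr, so the reduction is essentially immediate once the tie-breaking threshold is pinned down.

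First I would record the elementary fact that with only two candidates, deciding whether $c_1$ can be made to win under some highlighted set $S$ amounts to a single threshold query on the maximum support. Concretely: under best-case tie-breaking, every voter $v_j$ who is equidistant from $c_1$ and $c_2$ (given $S$) contributes to $c_1$'s tally, and $c_1$ wins the plurality election as soon as its tally weakly exceeds $c_2$'s; hence $c_1$ wins under $S$ iff at least $\lceil n/2 \rceil$ of the $n$ voters vote for $c_1$ under $S$.

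Next I would argue the hardness. Suppose we had a polynomial-time algorithm for \textsc{BMS} on two-candidate instances. Given any two-candidate instance of \biscabr, run this algorithm to obtain a nonempty set $S^\ast$ attaining the maximum support $t^\ast$. By the observation above, answer ``yes'' to \biscabr iff $t^\ast \ge \lceil n/2 \rceil$. This is a polynomial-time decision procedure for two-candidate \biscabr with best-case tie-breaking, which is NP-complete by Theorem~\ref{thm:bis-bc-npc}; therefore \textsc{BMS} is NP-hard. (In fact the instance family constructed in the proof of Theorem~\ref{thm:bis-bc-npc} already witnesses this directly: there $c_1$ can collect at most the $ks+2$ votes of Blocks~1 and~2 — since all $ks+2$ voters of Block~3 vote for $c_2$ regardless — and the underlying Hitting Set instance admits a solution of size at most $k$ precisely when the maximum support equals $ks+2$, i.e.\ $\lceil n/2\rceil$.)

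The only obstacle worth noting is bookkeeping: one must state the tie-breaking convention carefully so that the threshold $\lceil n/2\rceil$ is exactly the winning condition for $c_1$ in a two-candidate race, and (if one prefers to make worst-case tie-breaking hardness explicit) note that the ``easy to adapt to worst-case tie-breaking'' remark at the end of Theorem~\ref{thm:bis-bc-npc} carries over verbatim. Beyond that there is no combinatorial difficulty — the corollary is a direct consequence of the preceding theorem.
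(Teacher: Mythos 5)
Your proposal is correct and matches the paper's intent: the corollary is stated without proof precisely because it follows by the standard ``optimization is at least as hard as decision'' argument you spell out, comparing the maximum support returned by a \textsc{BMS} oracle against the winning threshold $\lceil n/2\rceil$ for the two-candidate instances of Theorem~\ref{thm:bis-bc-npc}. Your arithmetic on the witness family ($c_1$ wins iff it captures all $ks+2$ voters of Blocks~1 and~2, which equals $\lceil n/2\rceil$ since Block~3's $ks+2$ voters always go to $c_2$) is consistent with that theorem's construction.
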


Although \textsc{Binary Max Support} is NP-hard, we now show that
it is easy to achieve a $\frac{1}{2}$-approximation using the
following \textsc{Best-Single-Issue} algorithm: choose one issue that
maximizes the net number of voters $c_1$ captures.
\begin{theorem}
	The \textsc{Best-Single-Issue} algorithm approximates $2$-candidate \textsc{Binary Max Support} to within a factor of $\frac{1}{2}$, for best-case and worst-case tie-breaking.
\end{theorem}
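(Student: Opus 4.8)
The plan is to work in the sign representation of the two–candidate binary problem. Without loss of generality take $c_1$ to be the all-ones vector and $c_2$ the all-zeros vector, as in the setup of Theorem~\ref{thm:bis-bc-npc}. For a voter $v_j$ and issue $k$, write $a_{jk}=1$ if $v_{jk}=1$ (i.e.\ $v_j$ agrees with $c_1$ on $k$) and $a_{jk}=0$ otherwise, and set $n_k=\sum_j a_{jk}$, the number of voters who agree with $c_1$ on issue $k$. Given a nonempty highlighted set $S$, voter $v_j$ supports $c_1$ iff $\sum_{k\in S}a_{jk}\ge \tfrac12|S|$ under best-case tie-breaking and iff $\sum_{k\in S}a_{jk}>\tfrac12|S|$ under worst-case tie-breaking. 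The key first observation is that a singleton $S=\{k\}$ can never produce a tie, so restricting to $\{k\}$ makes exactly $n_k$ voters support $c_1$ under either rule; since ``net number of voters $c_1$ captures'' for a single issue equals $n_k-(n-n_k)=2n_k-n$, maximizing it is the same as maximizing $n_k$, and so \textsc{Best-Single-Issue} always outputs a legal (nonempty) $S$ capturing exactly $\max_k n_k$ voters.

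The main argument is a double-counting bound against an optimal solution. Fix an optimal nonempty $S^\ast$ and let $W$ be the set of voters it captures, so that $|W|=\mathrm{OPT}$. For each $j\in W$ the winning condition (best- or worst-case) gives $\sum_{k\in S^\ast}a_{jk}\ge \tfrac12|S^\ast|$. I would then count the set $P=\{(j,k): j\in W,\ k\in S^\ast,\ a_{jk}=1\}$ in two ways. Summing over $j$ first, $|P|=\sum_{j\in W}\sum_{k\in S^\ast}a_{jk}\ge |W|\cdot\tfrac12|S^\ast|=\tfrac12\,\mathrm{OPT}\,|S^\ast|$. Summing over $k$ first, $|P|=\sum_{k\in S^\ast}|\{j\in W: a_{jk}=1\}|\le \sum_{k\in S^\ast}n_k\le |S^\ast|\cdot\max_k n_k$. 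Combining these and cancelling $|S^\ast|>0$ yields $\max_k n_k\ge \tfrac12\,\mathrm{OPT}$, i.e.\ the algorithm attains at least half the optimum, which is the claim for best-case tie-breaking.

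For worst-case tie-breaking the same chain works verbatim: the worst-case winning inequality $\sum_{k\in S^\ast}a_{jk}>\tfrac12|S^\ast|$ is strictly stronger than its best-case counterpart, so the estimate $\sum_{k\in S^\ast}a_{jk}\ge\tfrac12|S^\ast|$ for each $j\in W$ still holds, and the value attained by \textsc{Best-Single-Issue} is still $\max_k n_k$ because a single highlighted issue is never tied. I do not expect a genuine obstacle here; the only points needing care are the bookkeeping around tie-breaking — that one highlighted issue produces no ties (so the algorithm's value is literally $\max_k n_k$ under either rule), that ``net number captured'' and ``number captured'' differ only by the constant $n$ when $|S|=1$ so the algorithm indeed maximizes $n_k$, and that $\mathrm{OPT}$ must be taken under the same tie-breaking rule the algorithm is evaluated against.
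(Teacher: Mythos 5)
Your core double-counting step is sound and is in substance the paper's own argument: the paper's Case~2 counts the opinion-$1$ entries over the optimal issue set $S^\ast$ in two ways (at most $|S^\ast|\cdot\max_k n_k$ on one hand, more than $\tfrac12|S^\ast|\cdot\mathrm{OPT}$ on the other) and derives a contradiction, while you run the same count directly as an inequality and thereby also absorb the paper's Case~1 (the subcase $\max_k n_k\ge n/2$). Your treatment of worst-case tie-breaking via ``the strict winning condition implies the weak one'' likewise matches the paper.

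The genuine gap is the opening ``without loss of generality $c_2$ is the all-zeros vector.'' Normalizing $c_1$ to all-ones is a true symmetry (flip every participant's opinion on each issue where $c_{1k}=0$), but forcing $c_2$ to be the complement of $c_1$ is not: an instance may contain issues on which the two candidates agree, and several of your stated facts fail there. On such a neutral issue $k$ every voter is equidistant from the two candidates, so the singleton $S=\{k\}$ \emph{does} produce a tie (for all $n$ voters at once), and the criterion ``$v_j$ supports $c_1$ iff $\sum_{k\in S}a_{jk}\ge\tfrac12|S|$'' no longer characterizes support when $S$ contains such issues. The paper devotes its Case~0 (best-case: a neutral issue highlighted alone captures all $n$ voters by tie-breaking, which is already optimal) and the first two paragraphs of its worst-case adaptation (if the candidates agree everywhere then $\mathrm{OPT}=0$ and anything is a $\tfrac12$-approximation; otherwise neutral issues can be removed from an optimal $S^\ast$ without losing captured voters) precisely to this reduction. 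Your proof needs the same few lines before the counting argument becomes valid; as written, ``a singleton can never produce a tie'' is false on general instances, and the claim that the algorithm's value equals $\max_k n_k$ fails under best-case tie-breaking when a neutral issue exists.
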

\begin{proof}
Let's denote number of voters by $n$ and the number of issues by
$\ell$. Among two candidates $c_1$ and $c_2$ the promoted one is
$c_1$. Without loss of generality, we can assume that candidate $c_1$ has opinion 1 on every issue. We will provide proof for the case of best-case tie-breaking and will describe changes needed to transform this proof into proof for worst-case tie-breaking. 

\begin{itemize}
\item {\bf [Case 0.]} There is an issue s.t. candidate $c_2$ also has opinion 1 about this issue. Therefore, if we highlight only this issue all voters will vote for $c_1$ because of tie-breaking. Thus, that is an optimal solution (and as such approximation within factor 2 of optimal solution). It is also the issue that captured the greatest number of voters for $c_1$ if highlighted. From now on we can assume that opinion of candidate $c_2$ is 0 for all issues.

\item {\bf [Case 1.]} There exists an issue s.t. at at least $\frac{n}{2}$ voters have same opinion as $c_1$. If highlighted such issue will capture for $c_1$ at least $\frac{n}{2}$ voters. That is, for issue that causes $c_1$ to capture the greatest number of voters it is at least $\frac{n}{2}$ voters too. Thus, it is provide $\frac{1}{2}$-approximation, because optimal solution is at most $n$.

\item {\bf [Case 2.]} Now we can assume that for all issues less then $\frac{n}{2}$ voters have opinion 1. Denote the largest such number by $h$ and show that $h$ is $\frac{1}{2}$-approximation of optimum. Assume the contrary. W.l.o.g. issues $s_1, \dots, s_k$ maximizes support for candidate $c_1$. By choice of $h$ the number of opinions which equals to 1 over all issues $s_1, \dots, s_k$ is at most $kh$. On the other hand voter supports candidate $c_1$ if and only if he has opinion 1 for at least $\frac{k}{2}$ issues among $s_1, \dots, s_k$. By assumption there are strictly more than $2h$ such issues. That is, on issues $s_1, \dots, s_k$ opinion 1 shared strictly more than $\frac{k}{2}2h=kh$ times. Obtained contradiction proves the theorem.
\end{itemize}

This proof can be easily adopted for the case of worst-case tie breaking. 
It is easy to see that if candidate $c_2$ has opinion 1 on all issues then for every highlighted set of issues support of candidate $c_1$ will be 0. Thus, any single issue provides $\frac{1}{2}$-approximation of optimum. Therefore, we may assume that there exist issue on which candidate $c_2$ has opinion 0.

Evidently, if there is optimum $s_{i_1}, \dots, s_{i_k}$ such that on
some of highlighted issues candidate $c_2$ has opinion
1. W.l.o.g. this issue $s_{i_k}$ then $s_{i_1}, \dots, s_{i_{k-1}}$ is
also optimum. Therefore, we may assume that candidates have different
opinions on all issues. Thus, it is enough to consider cases 1 and
2. The proof for case 1 remains unchanged. For case 2 we should change the
counting of number of points needed to obtain at least $2h$ votes in
favor of candidate $c_1$. A voter would only vote for $c_1$ if he has opinion 1 for $\left\lfloor \frac{k}{2}\right\rfloor +1$ 
issues $s_1, \dots, s_k$. Therefore, the number of opinions 1 is
$(\left\lfloor \frac{k}{2}\right\rfloor +1)2h \geq kh+h$, yielding same contradiction as in best-case tie-breaking. 
\end{proof}

\section{Algorithmic Approaches}
\label{S:algo}

We now present several general algorithmic approaches for 
\textsc{Max Support}: 1) exact approaches
based on integer linear programming (ILP), and 2)
a heuristic approach which works well in practice.

\noindent{\bf Integer Linear Programming: }
Define $A$ as follows:
\begin{equation}
A_{ijk} = \abs{c_{ik} - v_{jk}}^p - \abs{c_{1k} - v_{jk}}^p, 
\forall i \in [2:m], j \in V, k \in [1:\ell].
\end{equation}
Define $\alpha \coloneqq \sum_{ijk} \abs{A_{ijk}}$. 
The following ILP computes an optimal solution for (best-case)
\textsc{Max Support}:
\begin{subequations}
\begin{align}
& \max_x \sum_i^{m} y_i &\\ 
& \sum_k A_{ijk}x_k + (1-y_j)\alpha \ge 0 && \forall i \in [2:m], j \in V
&\label{C:1beatsi}\\
& x_k, y_j \in \{0, 1\} && \forall k \in [1:\ell], j \in V.%
\end{align}
\end{subequations}
Constraint~\eqref{C:1beatsi}, ensures that $y_j = 1$ iff $c_1$ is the
most favored by voter $j$. A similar approach can be used to develop a ILP approach for the
\textsc{Issue Selection Control} problem.

\noindent{\bf Greedy Heuristic: }
Finally, we present a simple greedy algorithm for the \textsc{Max
  Support} problem, where we iteratively add one issue at a time that
maximizes the net gain in voters.
We stop by adding any single issue would decrease the number of voters captured.

\section{Experiments}
	We now compare the performance of our exact and heuristic solution algorithms for the binary and continuous versions of the issue selection problem.
	We consider the greedy heuristics described above, as well as \textsc{Best-Single-Issue}.
	
	We run all of our experiments assuming a worst-case tie-breaking rule and generate random synthetic test cases.
	For continuous test problems, we sample candidate and voter belief vectors from the multivariate normal distribution with a mean of $0$ and a random covariance matrix.
    A similar generative model for Boolean issues, tends to produce problem instances in which \textsc{Best-Single-Issue} is nearly always optimal.
        Consequently, we generate a more specialized distribution of these instances as follows.
	We first construct a vertex-weighted complete binary tree $T$ on $2^\ell - 1$ vertices.
	Each vertex $v$ is assigned an independent random weight $p_v$ drawn from the uniform distribution on $[0,1]$.
	To produce a sample from $T$, we perform a directed random walk from its root to one of its leaves.
	The sequence ($0$ for left movements, and $1$ for right) emitted by this process is then the desired sample from $\{0,1\}^\ell$.

	We default to $3$ candidates, $100$ voters, and $10$ issues. 
	To generate each plot, we fix $2$ of these parameters and vary the $3$rd.
	We generate $100$ instances of \textsc{Max Support} for each set of parameter values, and run the heuristics on the instances.
	The plotted values are averages of the ratio of the number of voters captured %
and the optimal solution.

	\begin{figure}[h!]
		\begin{center}
		\begin{tabular}{cc}
		\includegraphics[width=40mm]{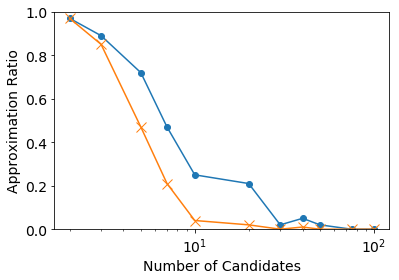} &
		\includegraphics[width=40mm]{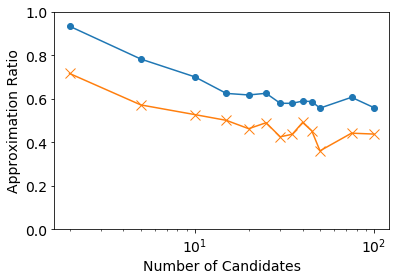} \\
		\includegraphics[width=40mm]{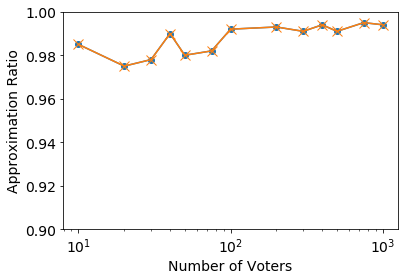} &
		\includegraphics[width=40mm]{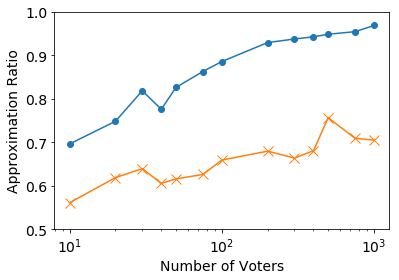} \\
		\includegraphics[width=40mm]{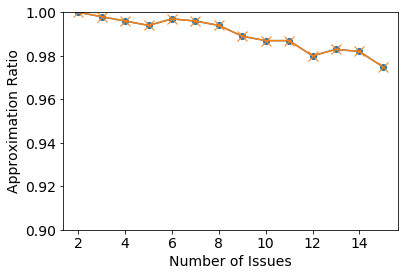} &
		\includegraphics[width=40mm]{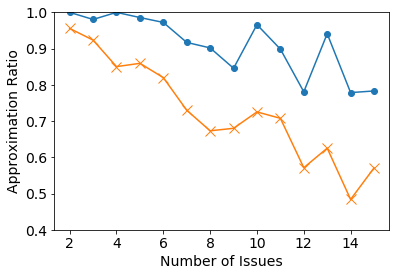} \\
		\multicolumn{2}{c}{\includegraphics[width=50mm]{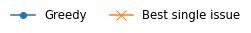}}
		\end{tabular}
		\end{center}
		\caption{Plots of experimentally observed approximation ratios as functions of the numbers of candidates, voters, and issues in synthetic test cases for binary (left) and continuous (right) versions of \textsc{Max Support}.}
	\end{figure}

We find that for most instances of \textsc{Max Support} with binary
issues, our greedy heuristic does not significantly outperform
\textsc{Best-Single-Issue} in the two-candidate setting as number of
issues and voters increase.  This is because the number of instances
in which a combination of issues can get us more voters than a single
best issue is increasingly unlikely.  However, the greedy algorithm
outperforms \textsc{Best-Single-Issue} on instances of \textsc{Binary
  Max Support} with greater than $2$ candidates.  We can also observe
that on the specific distribution of binary issue instances we
generate, the quality of heuristic solutions degrades rapidly with the
number of candidates.
	
	We find that for \textsc{Max Support} with real-valued issues, the greedy algorithm significantly outperforms \textsc{Best-Single-Issue}.
	For a small number of candidates ($< 5$), the greedy algorithm seems to perform within $0.8$ of optimal.
	Interestingly, as the number of voters increases, the greedy algorithm improves in quality on our randomly generated problem instances.
	In all cases, we can also observe that the heuristics tend to be close to optimal.

\section{Conclusion}
When candidates participate in an election, they must choose policies
and issues to stress in their campaigns. 
We introduce and study the problem of election control through issue selection. 
We find a number of strong negative results for the problem, and show
that, even though we canno t provide formal approximation guarantees
for a continuous instance of \textsc{Max Support}, a simple greedy
heuristic performs well.
Moreover, restricting issues to be binary admits further positive
results, including a $1/2$-approximation.

\begin{acks}
  This work was in part supported by NSF (IIS-1526860, IIS-1905558), NTU SUG M4081985 and MOE AcRF-T1-RG23/18 grants.
\end{acks}

\bibliographystyle{ACM-Reference-Format}  %


\begin{thebibliography}{00}


\ifx \showCODEN    \undefined \def \showCODEN     #1{\unskip}     \fi
\ifx \showDOI      \undefined \def \showDOI       #1{#1}\fi
\ifx \showISBNx    \undefined \def \showISBNx     #1{\unskip}     \fi
\ifx \showISBNxiii \undefined \def \showISBNxiii  #1{\unskip}     \fi
\ifx \showISSN     \undefined \def \showISSN      #1{\unskip}     \fi
\ifx \showLCCN     \undefined \def \showLCCN      #1{\unskip}     \fi
\ifx \shownote     \undefined \def \shownote      #1{#1}          \fi
\ifx \showarticletitle \undefined \def \showarticletitle #1{#1}   \fi
\ifx \showURL      \undefined \def \showURL       {\relax}        \fi
\providecommand\bibfield[2]{#2}
\providecommand\bibinfo[2]{#2}
\providecommand\natexlab[1]{#1}
\providecommand\showeprint[2][]{arXiv:#2}

\bibitem[\protect\citeauthoryear{Anshelevich, Bhardwaj, and Postl}{Anshelevich
  et~al\mbox{.}}{2015}]%
        {Anshelevich15}
\bibfield{author}{\bibinfo{person}{E. Anshelevich}, \bibinfo{person}{O.
  Bhardwaj}, {and} \bibinfo{person}{J. Postl}.}
  \bibinfo{year}{2015}\natexlab{}.
\newblock \showarticletitle{Approximating optimal social choice under metric
  preferences}. In \bibinfo{booktitle}{{\em AAAI Conference on Artificial
  Intelligence}}. \bibinfo{pages}{777--783}.
\newblock


\bibitem[\protect\citeauthoryear{Anshelevich and Postl}{Anshelevich and
  Postl}{2016}]%
        {Anshelevich16}
\bibfield{author}{\bibinfo{person}{E. Anshelevich} {and} \bibinfo{person}{J.
  Postl}.} \bibinfo{year}{2016}\natexlab{}.
\newblock \showarticletitle{Randomized social choice functions under metric
  preferences}. In \bibinfo{booktitle}{{\em International Joint Conference on
  Artificial Intelligence}}. \bibinfo{pages}{46--59}.
\newblock


\bibitem[\protect\citeauthoryear{Arora and Barak}{Arora and Barak}{2009}]%
        {Arora:2009:CCM:1540612}
\bibfield{author}{\bibinfo{person}{Sanjeev Arora} {and} \bibinfo{person}{Boaz
  Barak}.} \bibinfo{year}{2009}\natexlab{}.
\newblock \bibinfo{booktitle}{{\em Computational Complexity: A Modern
  Approach}}.
\newblock \bibinfo{publisher}{Cambridge University Press}.
\newblock
\showISBNx{0521424267, 9780521424264}


\bibitem[\protect\citeauthoryear{Austen-Smith}{Austen-Smith}{1990}]%
        {AustenSmith90}
\bibfield{author}{\bibinfo{person}{David Austen-Smith}.}
  \bibinfo{year}{1990}\natexlab{}.
\newblock \showarticletitle{Information Transmission in Debate}.
\newblock \bibinfo{journal}{{\em American Journal of Political Science\/}}
  \bibinfo{volume}{34}, \bibinfo{number}{1} (\bibinfo{year}{1990}),
  \bibinfo{pages}{124--152}.
\newblock


\bibitem[\protect\citeauthoryear{Bartholdi, Tovey, and Trick}{Bartholdi
  et~al\mbox{.}}{1992}]%
        {Bartholdi92}
\bibfield{author}{\bibinfo{person}{John~J. Bartholdi},
  \bibinfo{person}{Craig~A. Tovey}, {and} \bibinfo{person}{Michael~A. Trick}.}
  \bibinfo{year}{1992}\natexlab{}.
\newblock \showarticletitle{How hard is it to control an election?}
\newblock \bibinfo{journal}{{\em Mathematical and Computer Modelling\/}}
  \bibinfo{volume}{16}, \bibinfo{number}{8} (\bibinfo{year}{1992}),
  \bibinfo{pages}{27 -- 40}.
\newblock


\bibitem[\protect\citeauthoryear{Binkele-Raible, Erdelyi, Fernau, Goldsmith,
  Mattei, and Rothe}{Binkele-Raible et~al\mbox{.}}{2014}]%
        {Goldsmith14}
\bibfield{author}{\bibinfo{person}{Daniel Binkele-Raible},
  \bibinfo{person}{Gabor Erdelyi}, \bibinfo{person}{Henning Fernau},
  \bibinfo{person}{Judy Goldsmith}, \bibinfo{person}{Nicholas Mattei}, {and}
  \bibinfo{person}{Jorg Rothe}.} \bibinfo{year}{2014}\natexlab{}.
\newblock \showarticletitle{The Complexity of Probabilistic Lobbying}.
\newblock \bibinfo{journal}{{\em Discrete Optimization\/}}
  \bibinfo{volume}{11}, \bibinfo{number}{1} (\bibinfo{year}{2014}),
  \bibinfo{pages}{86--97}.
\newblock


\bibitem[\protect\citeauthoryear{Black}{Black}{1948}]%
        {Black1948}
\bibfield{author}{\bibinfo{person}{Duncan Black}.}
  \bibinfo{year}{1948}\natexlab{}.
\newblock \showarticletitle{On the Rationale of Group Decision-making}.
\newblock \bibinfo{journal}{{\em Journal of Political Economy\/}}
  \bibinfo{volume}{56}, \bibinfo{number}{1} (\bibinfo{year}{1948}),
  \bibinfo{pages}{23--34}.
\newblock
\showISSN{00223808, 1537534X}
\showURL{%
\url{http://www.jstor.org/stable/1825026}}


\bibitem[\protect\citeauthoryear{Christian, Fellows, Rosamond, and
  Slinko}{Christian et~al\mbox{.}}{2007}]%
        {Christian07}
\bibfield{author}{\bibinfo{person}{R. Christian}, \bibinfo{person}{M. Fellows},
  \bibinfo{person}{F. Rosamond}, {and} \bibinfo{person}{A. Slinko}.}
  \bibinfo{year}{2007}\natexlab{}.
\newblock \showarticletitle{On complexity of lobbying in multiple referenda}.
\newblock \bibinfo{journal}{{\em Review of Economic Design\/}}
  \bibinfo{volume}{11}, \bibinfo{number}{3} (\bibinfo{year}{2007}),
  \bibinfo{pages}{217--224}.
\newblock


\bibitem[\protect\citeauthoryear{Davis and Hinich}{Davis and Hinich}{1966}]%
        {Davis66}
\bibfield{author}{\bibinfo{person}{O.A. Davis} {and} \bibinfo{person}{M.J.
  Hinich}.} \bibinfo{year}{1966}\natexlab{}.
\newblock \showarticletitle{A mathematical model of preference formation in a
  democratic society}.
\newblock In \bibinfo{booktitle}{{\em Mathematical Applications in Political
  Science II}}, \bibfield{editor}{\bibinfo{person}{J.~Bernd}} (Ed.).
  \bibinfo{publisher}{Southern Methodist University Press},
  \bibinfo{pages}{175--208}.
\newblock


\bibitem[\protect\citeauthoryear{Enelow and Hinich}{Enelow and Hinich}{1990}]%
        {Enelow90}
\bibfield{author}{\bibinfo{person}{J.M. Enelow} {and} \bibinfo{person}{M.J.
  Hinich}.} \bibinfo{year}{1990}\natexlab{}.
\newblock \bibinfo{booktitle}{{\em Advances in the Spatial Theory of Voting}}.
\newblock \bibinfo{publisher}{Cambridge University Press}.
\newblock


\bibitem[\protect\citeauthoryear{Erd{\'e}lyi and Fellows}{Erd{\'e}lyi and
  Fellows}{2010}]%
        {erdelyi2010parameterized}
\bibfield{author}{\bibinfo{person}{G{\'a}bor Erd{\'e}lyi} {and}
  \bibinfo{person}{M Fellows}.} \bibinfo{year}{2010}\natexlab{}.
\newblock \showarticletitle{Parameterized control complexity in Bucklin voting
  and in fallback voting}. In \bibinfo{booktitle}{{\em Proceedings of the 3rd
  International Workshop on Computational Social Choice}}. Universit{\"a}t
  D{\"u}sseldorf, \bibinfo{pages}{163--174}.
\newblock


\bibitem[\protect\citeauthoryear{Erd{\'{e}}lyi, Nowak, and Rothe}{Erd{\'{e}}lyi
  et~al\mbox{.}}{2009}]%
        {Erdelyi09}
\bibfield{author}{\bibinfo{person}{G{\'{a}}bor Erd{\'{e}}lyi},
  \bibinfo{person}{Markus Nowak}, {and} \bibinfo{person}{J{\"{o}}rg Rothe}.}
  \bibinfo{year}{2009}\natexlab{}.
\newblock \showarticletitle{Sincere-Strategy Preference-Based Approval Voting
  Fully Resists Constructive Control and Broadly Resists Destructive Control}.
\newblock \bibinfo{journal}{{\em Mathematical Logic Quarterly\/}}
  \bibinfo{volume}{55}, \bibinfo{number}{4} (\bibinfo{year}{2009}),
  \bibinfo{pages}{425--443}.
\newblock


\bibitem[\protect\citeauthoryear{Erd{\'e}lyi and Rothe}{Erd{\'e}lyi and
  Rothe}{2010}]%
        {erdelyi2010control}
\bibfield{author}{\bibinfo{person}{G{\'a}bor Erd{\'e}lyi} {and}
  \bibinfo{person}{J{\"o}rg Rothe}.} \bibinfo{year}{2010}\natexlab{}.
\newblock \showarticletitle{Control complexity in fallback voting}. In
  \bibinfo{booktitle}{{\em Proceedings of the Sixteenth Symposium on Computing:
  the Australasian Theory-Volume 109}}. Australian Computer Society, Inc.,
  \bibinfo{pages}{39--48}.
\newblock


\bibitem[\protect\citeauthoryear{Faliszewski, Hemaspaandra, and
  Hemaspaandra}{Faliszewski et~al\mbox{.}}{2009a}]%
        {Faliszewski09}
\bibfield{author}{\bibinfo{person}{Piotr Faliszewski}, \bibinfo{person}{Edith
  Hemaspaandra}, {and} \bibinfo{person}{Lane~A. Hemaspaandra}.}
  \bibinfo{year}{2009}\natexlab{a}.
\newblock \showarticletitle{How Hard Is Bribery in Elections?}
\newblock \bibinfo{journal}{{\em Journal of Artificial Intelligence
  Research\/}} \bibinfo{volume}{35}, \bibinfo{number}{1}
  (\bibinfo{year}{2009}), \bibinfo{pages}{485--532}.
\newblock


\bibitem[\protect\citeauthoryear{Faliszewski, Hemaspaandra, Hemaspaandra, and
  Rothe}{Faliszewski et~al\mbox{.}}{2009b}]%
        {faliszewski2009llull}
\bibfield{author}{\bibinfo{person}{Piotr Faliszewski}, \bibinfo{person}{Edith
  Hemaspaandra}, \bibinfo{person}{Lane~A Hemaspaandra}, {and}
  \bibinfo{person}{J{\"o}rg Rothe}.} \bibinfo{year}{2009}\natexlab{b}.
\newblock \showarticletitle{Llull and Copeland voting computationally resist
  bribery and constructive control}.
\newblock \bibinfo{journal}{{\em Journal of Artificial Intelligence
  Research\/}}  \bibinfo{volume}{35} (\bibinfo{year}{2009}),
  \bibinfo{pages}{275--341}.
\newblock


\bibitem[\protect\citeauthoryear{Faliszewski and Rothe}{Faliszewski and
  Rothe}{2016}]%
        {DBLP:reference/choice/FaliszewskiR16}
\bibfield{author}{\bibinfo{person}{Piotr Faliszewski} {and}
  \bibinfo{person}{J{\"{o}}rg Rothe}.} \bibinfo{year}{2016}\natexlab{}.
\newblock \showarticletitle{Control and Bribery in Voting}.
\newblock In \bibinfo{booktitle}{{\em Handbook of Computational Social
  Choice}}. \bibinfo{publisher}{Cambridge University Press},
  \bibinfo{pages}{146--168}.
\newblock


\bibitem[\protect\citeauthoryear{Hemaspaandra, Hemaspaandra, and
  Rothe}{Hemaspaandra et~al\mbox{.}}{2007}]%
        {Hemaspaandra07}
\bibfield{author}{\bibinfo{person}{Edith Hemaspaandra},
  \bibinfo{person}{Lane~A. Hemaspaandra}, {and} \bibinfo{person}{J{\"{o}}rg
  Rothe}.} \bibinfo{year}{2007}\natexlab{}.
\newblock \showarticletitle{Anyone but Him: The Complexity of Precluding an
  Alternative}.
\newblock \bibinfo{journal}{{\em Artificial Intelligence\/}}
  \bibinfo{volume}{171}, \bibinfo{number}{5-6} (\bibinfo{year}{2007}),
  \bibinfo{pages}{255--285}.
\newblock


\bibitem[\protect\citeauthoryear{Hemaspaandra, Hemaspaandra, and
  Rothe}{Hemaspaandra et~al\mbox{.}}{2009}]%
        {hemaspaandra2009hybrid}
\bibfield{author}{\bibinfo{person}{Edith Hemaspaandra}, \bibinfo{person}{Lane~A
  Hemaspaandra}, {and} \bibinfo{person}{J{\"o}rg Rothe}.}
  \bibinfo{year}{2009}\natexlab{}.
\newblock \showarticletitle{Hybrid Elections Broaden Complexity-Theoretic
  Resistance to Control}.
\newblock \bibinfo{journal}{{\em Mathematical Logic Quarterly\/}}
  \bibinfo{volume}{55}, \bibinfo{number}{4} (\bibinfo{year}{2009}),
  \bibinfo{pages}{397--424}.
\newblock


\bibitem[\protect\citeauthoryear{Hotelling}{Hotelling}{1929}]%
        {Hotelling29}
\bibfield{author}{\bibinfo{person}{Harold Hotelling}.}
  \bibinfo{year}{1929}\natexlab{}.
\newblock \showarticletitle{Stability in Competition}.
\newblock \bibinfo{journal}{{\em The Economic Journal\/}} \bibinfo{volume}{39},
  \bibinfo{number}{153} (\bibinfo{year}{1929}), \bibinfo{pages}{41--57}.
\newblock
\showISSN{00130133, 14680297}
\showURL{%
\url{http://www.jstor.org/stable/2224214}}


\bibitem[\protect\citeauthoryear{Lott and Reed}{Lott and Reed}{1989}]%
        {Lott89}
\bibfield{author}{\bibinfo{person}{John~R. Lott} {and}
  \bibinfo{person}{W.~Robert Reed}.} \bibinfo{year}{1989}\natexlab{}.
\newblock \showarticletitle{Shirking and Sorting in a Political Market with
  Finite-Lived Politicians}.
\newblock \bibinfo{journal}{{\em Public Choice\/}} \bibinfo{volume}{61},
  \bibinfo{number}{1} (\bibinfo{year}{1989}), \bibinfo{pages}{75--96}.
\newblock


\bibitem[\protect\citeauthoryear{McKelvey and Ordeshook}{McKelvey and
  Ordeshook}{1990}]%
        {Mckelvey90}
\bibfield{author}{\bibinfo{person}{R.D. McKelvey} {and} \bibinfo{person}{P.C.
  Ordeshook}.} \bibinfo{year}{1990}\natexlab{}.
\newblock \showarticletitle{A decade of experimental research on spatial models
  of elections and committees}.
\newblock \bibinfo{journal}{{\em Advances in the spatial theory of voting\/}}
  (\bibinfo{year}{1990}), \bibinfo{pages}{99--144}.
\newblock


\bibitem[\protect\citeauthoryear{Menton}{Menton}{2013}]%
        {Menton2013}
\bibfield{author}{\bibinfo{person}{Curtis Menton}.}
  \bibinfo{year}{2013}\natexlab{}.
\newblock \showarticletitle{Normalized Range Voting Broadly Resists Control}.
\newblock \bibinfo{journal}{{\em Theory of Computing Systems\/}}
  \bibinfo{volume}{53}, \bibinfo{number}{4} (\bibinfo{date}{01 Nov}
  \bibinfo{year}{2013}), \bibinfo{pages}{507--531}.
\newblock
\showDOI{%
\url{https://doi.org/10.1007/s00224-012-9441-0}}


\bibitem[\protect\citeauthoryear{Merrill and Groffman}{Merrill and
  Groffman}{1999}]%
        {Merrill99}
\bibfield{author}{\bibinfo{person}{S. Merrill} {and} \bibinfo{person}{B.
  Groffman}.} \bibinfo{year}{1999}\natexlab{}.
\newblock \bibinfo{booktitle}{{\em A Unified Theory of Voting: {Directional}
  and Proximity Spatial Models}}.
\newblock \bibinfo{publisher}{Cambridge University Press}.
\newblock


\bibitem[\protect\citeauthoryear{Put and Faliszewski}{Put and
  Faliszewski}{2016}]%
        {Put2016}
\bibfield{author}{\bibinfo{person}{Tomasz Put} {and} \bibinfo{person}{Piotr
  Faliszewski}.} \bibinfo{year}{2016}\natexlab{}.
\newblock \showarticletitle{The Complexity of Voter Control and Shift Bribery
  Under Parliament Choosing Rules}. In \bibinfo{booktitle}{{\em Transactions on
  Computational Collective Intelligence XXIII}}. \bibinfo{pages}{29--50}.
\newblock


\bibitem[\protect\citeauthoryear{Sabato, Obraztsova, Rabinovich, and
  Rosenschein}{Sabato et~al\mbox{.}}{2017}]%
        {sorr_2017}
\bibfield{author}{\bibinfo{person}{Itay Sabato}, \bibinfo{person}{Svetlana
  Obraztsova}, \bibinfo{person}{Zinovi Rabinovich}, {and}
  \bibinfo{person}{Jeffrey~S. Rosenschein}.} \bibinfo{year}{2017}\natexlab{}.
\newblock \showarticletitle{Real Candidacy Games: a New Model for Strategic
  Candidacy}. In \bibinfo{booktitle}{{\em International Conference on
  Autonomous Agents and Multiagent Systems}}. \bibinfo{pages}{867--875}.
\newblock


\bibitem[\protect\citeauthoryear{Shen and Wang}{Shen and Wang}{2017}]%
        {DBLP:conf/atal/ShenW17}
\bibfield{author}{\bibinfo{person}{Weiran Shen} {and} \bibinfo{person}{Zihe
  Wang}.} \bibinfo{year}{2017}\natexlab{}.
\newblock \showarticletitle{Hotelling-Downs Model with Limited Attraction}. In
  \bibinfo{booktitle}{{\em International Conference on Autonomous Agents and
  Multiagent Systems}}. \bibinfo{pages}{660--668}.
\newblock


\bibitem[\protect\citeauthoryear{Skowron and Elkind}{Skowron and
  Elkind}{2017}]%
        {Skowron17}
\bibfield{author}{\bibinfo{person}{Piotr Skowron} {and} \bibinfo{person}{Edith
  Elkind}.} \bibinfo{year}{2017}\natexlab{}.
\newblock \showarticletitle{Social choice under metric preferences: {Scoring}
  rules and {STV}}. In \bibinfo{booktitle}{{\em AAAI Conference on Artificial
  Intelligence}}. \bibinfo{pages}{706--712}.
\newblock


\bibitem[\protect\citeauthoryear{Smithies}{Smithies}{1941}]%
        {Smithies41}
\bibfield{author}{\bibinfo{person}{A. Smithies}.}
  \bibinfo{year}{1941}\natexlab{}.
\newblock \showarticletitle{Optimum Location in Spatial Competition}.
\newblock \bibinfo{journal}{{\em Journal of Political Economy\/}}
  \bibinfo{volume}{49}, \bibinfo{number}{3} (\bibinfo{year}{1941}),
  \bibinfo{pages}{423--439}.
\newblock


\bibitem[\protect\citeauthoryear{Yang, Shrestha, and Guo}{Yang
  et~al\mbox{.}}{2016}]%
        {article}
\bibfield{author}{\bibinfo{person}{Yongjie Yang}, \bibinfo{person}{Yash
  Shrestha}, {and} \bibinfo{person}{Jiong Guo}.}
  \bibinfo{year}{2016}\natexlab{}.
\newblock \showarticletitle{How Hard Is Bribery with Distance Restrictions?}.
  In \bibinfo{booktitle}{{\em European Conference on Artificial Intelligence}}.
\newblock


\end{thebibliography}
\balance  %

\end{document}